\newcommand{\changed}[1]{{\color{black}#1}}
\newcommand{\lp}{\ell'}
  \let\oldparagraph\paragraph
  \renewcommand{\paragraph}{
    \@ifstar
      \xxxParagraphStar
      \xxxParagraphNoStar
  }
  \newcommand{\xxxParagraphStar}[1]{\oldparagraph*{#1}\mbox{}}
  \newcommand{\xxxParagraphNoStar}[1]{\oldparagraph{#1}\mbox{}}
  \let\oldsubparagraph\subparagraph
  \renewcommand{\subparagraph}{
    \@ifstar
      \xxxSubParagraphStar
      \xxxSubParagraphNoStar
  }
  \newcommand{\xxxSubParagraphStar}[1]{\oldsubparagraph*{#1}\mbox{}}
  \newcommand{\xxxSubParagraphNoStar}[1]{\oldsubparagraph{#1}\mbox{}}
\patchcmd\longtable{\par}{\if@noskipsec\mbox{}\fi\par}{}{}
\def\maxwidth{\ifdim\Gin@nat@width>\linewidth\linewidth\else\Gin@nat@width\fi}
\def\maxheight{\ifdim\Gin@nat@height>\textheight\textheight\else\Gin@nat@height\fi}
\def\fps@figure{htbp}
  \renewcommand*\contentsname{Table of contents}
  \newcommand\contentsname{Table of contents}
  \renewcommand*\listfigurename{List of Figures}
  \newcommand\listfigurename{List of Figures}
  \renewcommand*\listtablename{List of Tables}
  \newcommand\listtablename{List of Tables}
  \renewcommand*\figurename{Figure}
  \newcommand\figurename{Figure}
  \renewcommand*\tablename{Table}
  \newcommand\tablename{Table}
\date{January 2026}
\newcommand{\anon}{1}
\newcommand{\bsa}{\boldsymbol{a}}
\newcommand{\bsb}{\boldsymbol{b}}
\newcommand{\bsc}{\boldsymbol{c}}
\newcommand{\bsu}{\boldsymbol{u}}
\newcommand{\bsx}{\boldsymbol{x}}
\newcommand{\bsz}{\boldsymbol{z}}
\newcommand{\natu}{\mathbb{N}}
\newcommand{\real}{\mathbb{R}}
\newcommand{\rd}{\mathrm{\, d}}
\newcommand{\var}{{\mathrm{Var}}}
\newcommand{\vol}{{\mathbf{vol}}}
\DeclareMathOperator*{\argmax}{argmax}
\DeclareMathOperator*{\argmin}{argmin}
\renewcommand{\ge}{\geqslant}
\renewcommand{\le}{\leqslant}
\newcommand{\dnorm}{\mathcal{N}}
\newcommand{\dunif}{\mathbf{U}} 
\newcommand{\dustd}{\mathbf{U}} 
\newcommand{\e}{{\mathbb{E}}} 
\newcommand{\hk}{{\mathrm{HK}}}
\renewcommand{\vol}{\mathrm{Vol}}
\newcommand{\bsn}{\boldsymbol{n}}
\newcommand{\bstau}{\boldsymbol{\tau}}
\newcommand{\bszero}{\boldsymbol{0}}
\renewcommand{\le}{\leqslant}
\renewcommand{\ge}{\geqslant}
\renewcommand{\leq}{\leqslant}
\renewcommand{\geq}{\geqslant}
\newcommand{\mc}{\mathrm{MC}}
\newcommand{\rqmc}{\mathrm{RQMC}}
\newcommand{\adj}{\mathrm{ADJ}}
\newcommand{\ind}{\boldsymbol{1}} 
\newcommand{\INPUT}{\item[\textbf{Input:}]}
\newcommand{\OUTPUT}{\item[\textbf{Output:}]}
\newcommand{\giv}{\!\mid\!}
\newcommand{\bsell}{\boldsymbol{\ell}}
\newcommand{\bsk}{\boldsymbol{k}}
\newcommand{\simiid}{\stackrel{\mathrm{iid}}{\sim}}
\newtheorem{proposition}{Proposition}
\begin{document}

\def\spacingset#1{\renewcommand{\baselinestretch}%
{#1}\small\normalsize} \spacingset{1}


\if1\anon
{
  \title{\bf Quasi-Monte Carlo with one categorical variable}
  \author{Valerie N. P. Ho\\
  Stanford  University   \\and\\
    Art B. Owen\thanks{
      This work was supported by the National Science Foundation
      under grant DMS-2152780}\\  
 Stanford University\\and\\
 Zexin Pan\\
 Zhejiang University
}
  \maketitle
} \fi

\if0\anon
{
  \bigskip
  \bigskip
  \bigskip
  \begin{center}
    {\LARGE\bf Title}
\end{center}
  \medskip
} \fi

\begin{abstract}
We study randomized quasi-Monte Carlo (RQMC)
estimation of a multivariate integral where one of the variables
takes only a finite number of values. This problem arises 
when the variable of integration is drawn from a mixture
distribution as is common in importance sampling and also
arises in some recent work on transport
maps. We find that when integration error decreases
at an RQMC rate that it is then beneficial to oversample
the smallest mixture components instead of using
a proportional allocation. The optimal allocations
depend on the possibly unknown convergence rates.
Designing the sample with an incorrect assumption
on the rate still attains that convergence rate, with an inferior
implied constant.  The penalty for using a pessimistic rate
is typically higher than for using an optimistic one.
We also find that for the most
accurate RQMC sampling methods, it is advantageous
to arrange that our $n=2^m$ randomized Sobol' points 
split into subsample sizes that are also powers of~$2$.
\end{abstract}

\noindent%


\section{Introduction}

We study randomized quasi-Monte Carlo (RQMC) integration in a setting where
one of the input variables is categorical, 
taking only some finite number $L\ge2$ of different values.
The main motivation is mixture sampling where the variables of interest come from $L$ different distributions representing
$L$ different scenarios. That is, $\bsx\sim P_\ell$ with probability
$\alpha_\ell>0$ where $\sum_{\ell=1}^L\alpha_\ell=1$.
We want to estimate $\mu = \e(g(\bsx))=\sum_{\ell=1}^L\alpha_\ell \e(g(\bsx)\giv \bsx\sim P_\ell)$. 

A straightforward approach to this problem is to simply replace a
standard Monte Carlo (MC) workflow with RQMC sampling.
In that workflow, one random variable selects \changed{component} $\ell$ with
probability $\alpha_\ell$.
\changed{Then $s$ more} random variables generate $\bsx\sim P_\ell$. 
\changed{RQMC estimates over $s+1$ dimensions converge at a better rate than MC estimates do. Then the optimal sampling allocations become more nearly equal than proportional sampling, and we modify RQMC to take advantage of this phenomenon.  We see reduced variance with those choices, and under regularity conditions, non-proportional  sampling of the $P_\ell$ can even improve the convergence rate.  We also show a minimaxity property for equal or nearly equal sampling fractions.}

\changed{The input points leading to $\bsx\sim P_\ell$ come from $L$ disjoint subsets or strata in $[0,1]^{s+1}$. From here on we refer to the mixture components as strata to avoid confusion with vector components.}

An alternative to the standard MC workflow 
is to compute $L$ different independently sampled RQMC estimates,
one per stratum, and 
combine them according to their mixture probabilities $\alpha_\ell$.
That approach
misses a potential variance reduction from RQMC sampling 
when the within-stratum estimates are negatively correlated.
Such negative correlation need not happen; we give a reason to expect \changed{them} at the end of the paper.

RQMC does best under favorable regularity assumptions, such
as continuity, on the integrand. 
The presence of the categorical variable makes the resulting
integrand discontinuous.  That discontinuity is however 
axis-parallel
and so it is a QMC-friendly discontinuity in the terminology of
\cite{wang:tan:2013}.  In some very regular cases,
RQMC can attain a mean squared error (MSE) of $O(n^{-3+\epsilon})$.
Such a rate can only be attained over a sequence of special sample
sizes that grow at least exponentially \citep{sobo:1998,quadconstraint-print}. This rate can be attained by scrambled Sobol' points with sample size 
equal to a power of $2$.  
Even an integrand with a QMC-friendly axis-parallel discrepancy is not 
generally integrated with an error of this rate.  By using an importance
sampling \cite[Chapter~9]{mcbook} adjustment, we can arrange to 
obtain $n_\ell = 2^{m_\ell}$ points 
from $P_\ell$ while also having $n = \sum_{\ell=1}^Ln_\ell$ 
be a power of $2$. This can then give the $O(n_\ell^{-3+\epsilon})$ rate within every stratum and hence overall.

This paper is organized as follows.
Section \ref{sec:notation} introduces our notation, gives some
background on RQMC, and proves some results about within-stratum
discrepancies.
Section \ref{sec:inefficiencies} describes how to allocate
samples to strata. We see that in RQMC sampling, it is advantageous to use
more nearly equal allocations than those given by the
stratum sizes.
Section \ref{sec:powtwo} shows how to partition a sample
size $n=2^m$ into $L$ within-stratum sample sizes
$n_\ell =2^{m_\ell}$.
Section \ref{sec:minimax} establishes some minimax
variance results from using equal or nearly equal sample
sizes for the strata.
Section \ref{sec:toy} shows a simple toy example with $L=8$
strata where RQMC with $n_\ell$ taken to be powers of two
does best.
Section \ref{sec:flood} uses a mixture model version of the Saint-Venant
flood example from \cite{limb:dero:2010} where non-nominal sampling
proportions do best. \changed{The code used to produce the results for both examples is publicly available at \url{https://github.com/hoval58/qmc4categorical}. }
Section \ref{sec:discussion} has some discussion and conclusions.
\changed{There is an appendix for the proofs.}

We finish this introduction by describing some related
work in the literature.  
The key differences between
our work and the prior work we found \changed{are} in how the
problem has been formulated, and how the values are
sampled. 

We use the first component of our RQMC points to select
the stratum to sample from.  
\cite{keller2012parallel} used that approach in parallel computation.
Their first component was
used to determine which processor $\ell\in\{1,2,\dots,L\}$ 
got the $i$'th job.  The remaining components of their
QMC points sampled that job on processor $\ell$.
In that setting, $\alpha_\ell=1/L$ and
it makes sense to choose $L$ to be a power of $2$ at least
for Sobol' sequences.

\cite{cui:dick:pill:2025} consider a target distribution
on a bounded hyperrectangle $[\bsa,\bsb]\subset\real^d$.
The target probability density function $\pi$ is bounded.
They approximate it by a sum of products of hat functions.
That is equivalent to a mixture of distributions whose
components have independent triangle distributions.
They consider adaptive methods to select those triangle
densities given an unnormalized $\bar\pi\propto\pi$. 
For a possibly large number of mixture
components, they sample each component with $n_j\doteq n\alpha_j$
input values.
They use the first $n_j$ points of the same QMC point
set for each of the strata.  

We use different RQMC points
in each stratum.  Our inputs are randomized inputs, 
and so we expect better error cancellation than
we would get using the same inputs for every stratum.
Since our analysis is based on variances, we do
not require a bounded domain or integrand.
\changed{As mentioned above, we use non-proportional allocation.}

\cite{kleb:sull:2026} describe using transport maps
from a $\dnorm(0,I)$ reference distribution 
to sample a mixture like $P=\sum_{\ell=1}^L\alpha_j\dnorm(\mu_j,\Sigma_j)$.
They construct a continuous
mapping over $t\in[0,1]$ with $\dnorm(0,I)$ at $t=0$
reaching $P$ at $t=1$ and solve an ordinary
differential equation along that path.  For $\ell<L$,
they take the first $n_\ell = \lfloor n\alpha_\ell\rfloor$
points of a quasi-Monte Carlo sequence and use those
to sample from component $\ell$.
Then $n_L=n-\sum_{\ell=1}^{L-1}n_\ell$.


\section{Background}\label{sec:notation}

Here we introduce notation to describe the different
variables used in the integration workflow. Then
we define digital nets and discuss their randomizations.
For a broader survey of RQMC see \cite{lecu:lemi:2002}.

We use $\natu$ for the set of positive integers.
We use $\ind\{A\}$ for the indicator function that
equals $1$ when $A$ holds and is $0$ otherwise.

\subsection{Integration variables}
Our goal is to estimate \changed{this weighted sum of $d$-dimensional integrals}
$$
\mu = \sum_{\ell=1}^L\alpha_\ell \int_{\real^d}g(\bsx)p_\ell(\bsx)\rd\bsx
$$
where $p_\ell$ is the probability density function of $P_\ell$
and $g(\cdot)$ is the integrand of interest.
We assume that we have functions $\phi_\ell$ with
$\phi_\ell(\bsu)\sim P_\ell$ for $\bsu\sim\dunif(0,1)^s$.
Many such functions can be obtained using methods like those in
\cite{devr:1986}.
The dimension $s$ need not equal $d$. 

We will use RQMC points $\bsz\in[0,1]^{s+1}$.
We partition $\bsz$ into its first component $v$
that we use to select a stratum $\ell$
and the subsequent $s$ components $\bsu$ that
we use to sample $\bsx\sim P_\ell$.
That is, $\bsz = (v,u_1,u_2,\dots,u_s)$\changed{, and our $n$ sample points are $\bsz_i=(v_i,u_{i1},\dots,u_{is})$.}
For a  function $\bsell:[0,1]\to\{1,2,\dots,L\}$, we
will sample the $i$'th point using sample stratum $\bsell(v_i)$.  The pre-image of 
each $\ell\in\{1,2,\dots,L\}$ under $\bsell$ will always be a sub-interval of $[0,1]$.
We let $\beta_\ell$ be the length of that subinterval.
We abbreviate $\bsell(v_i)$ to $\ell(i)$ in some expressions.
Having selected stratum $\ell(i)$, we draw
$\bsx_i\in\real^d$ via $\bsx_i=\phi_{\ell(i)}(\bsu_i)$.

We now estimate $\mu =\e(g(\bsx))$ by
\begin{align}\label{eq:muhatg}
\hat\mu = 
\frac1n\sum_{i=1}^n 
\omega(v_i)
g(\bsx_i)\quad\text{where}\quad
\omega(v_i)=\frac{\alpha_{\ell(i)}}{\beta_{\ell(i)}},
\end{align}
is an importance sampling weight.  This weight only depends on the stratum
$\ell$ and so we use $\omega_\ell$ for $\omega(v)$ when $\bsell(v)=\ell$.
Two other ways to write this estimate will be useful.
Because both $v_i$ and
$\bsx_i$ are functions of $\bsz_i$ we write
\begin{align}\label{eq:muhatf}
\hat\mu = \frac1n\sum_{i=1}^n f(\bsz_i), 
\end{align}
for $$f(\bsz_i) = \omega(v_i)g(\bsx_i)
=\omega(v_i)g(\phi_{\bsell(v_i)}(\bsu_i)).$$
This $f$ is our $s+1$ dimensional integrand for which we will use RQMC.
It satisfies
$$
\mu = \int_{[0,1]^{s+1}} f(\bsz)\rd\bsz.
$$

\changed{Because each RQMC point $\bsz_i$ is uniformly distributed on $[0,1]^{s+1}$, the estimator $\hat{\mu}$ is unbiased for $\mu$.} 

We also define the within-stratum integrands individually
as functions of $[0,1]^s$.
The within-stratum integrands and integrals are
$$
h_\ell(\bsu) =  g(\phi_\ell(\bsu))
\quad\text{and}\quad
\mu_\ell = \int_{[0,1]^s}h_\ell(\bsu)\rd\bsu$$
respectively. We let 
\begin{align}\label{eq:defSell}
S_\ell =\{1\le i\le n\mid \bsell(v_i)=\ell\}
\end{align}
index the set of sample points for stratum $\ell$ and define
$n_\ell$ to be its cardinality.
Now we can write
\begin{align*}
\hat\mu 
&= \frac1n\sum_{i=1}^n\omega(v_i)g(\bsx_i)
= \frac1n\sum_{i=1}^n\sum_{\ell=1}^L\ind\{\bsell(v_i)=\ell\}\omega_\ell h_\ell(\bsu_i).
\end{align*}
When every $n_\ell\ge1$ we can write
\begin{align}\label{eq:defhatmu}
\hat\mu = \sum_{\ell=1}^L\omega_\ell\frac{n_\ell}n\hat\mu_\ell
\quad\text{where}\quad
\hat \mu_\ell = \frac1{n_\ell}\sum_{i\in S_\ell}h_\ell(\bsu_i)
\end{align}
is a within-stratum sample mean. 
For the RQMC method we consider,
we show below that $n_\ell\ge1$ whenever $\beta_\ell>2/n$.
We do not need $\Pr(n_\ell>0)=1$ for unbiasedness of $\hat\mu$, but
without it the stratum sample means $\hat\mu_\ell$ are not always well defined.

At this point we can introduce a small generalization.  Since we already 
have a dependence on $\ell$ in $h_\ell$ we could also 
replace $h_\ell(\bsu)=g(\phi_\ell(\bsu))$ by 
$h_\ell(\bsu) = g_\ell( \phi_\ell(\bsu)) = g_\ell(\bsx)$ to allow the function being
applied to $\bsx$ to vary with $\ell$.
\changed{That is $\mu = \sum_{\ell=1}^L\alpha_\ell\e(g_\ell(\bsx)\giv \bsx\sim P_\ell)$.}

\subsection{Sampling the strata}\label{sec:sampleastratum}

The way we sample strata does not affect MC computations
but it makes a difference in RQMC.
Having chosen $\beta_\ell>0$ with $\sum_{\ell=1}^L\beta_\ell=1$,
we let $B_0=0$ and $B_\ell = \sum_{1\le k\le \ell}\beta_k$ 
for $1\le \ell\le L$.
Then given $v$, we select stratum $\ell$ when
$$
B_{\ell-1}\le v<B_\ell.
$$
While $v=1$ has probability zero, for completeness we
set $\bsell(v)=L$ when $v=1$.
Because $v\sim\dunif[0,1]$,
the expected number of samples from stratum $\ell$
is $n\beta_\ell$.  
We will always order the strata so that
$\beta_1\ge\beta_2\ge\cdots\ge\beta_L$.
\changed{This ordering brings a sampling}
advantage in Section~\ref{sec:stratumdiscrep}.
\changed{It is ordinarily the same ordering as the $\alpha_\ell$, though 
we consider exceptions at the end of Section~\ref{subsec:optimal}.}

We will assume that the sampling values $v_1,\dots,v_n$
that we use are `stratified'.  By this we mean that
every interval $I_k=[(k-1)/n,k/n)$ contains precisely
one of $v_1,\dots,v_n$ for $k=1,\dots,n$.
The next proposition shows that \changed{$n_\ell>0$ whenever $\beta_\ell>2/n$.}

\begin{proposition}\label{prop:boundnl}
For $n\ge1$, let $v_1,\dots,v_n$ be stratified.
For $[A,B)\subset[0,1)$, let $n_*$ be the number
of $v_i\in[A,B)$.  Then
\begin{align}\label{eq:nlbound}
\lceil n\beta\rceil -2 \le n_* \le \lfloor n\beta\rfloor+2
\end{align}
where $\beta=B-A$.
\end{proposition}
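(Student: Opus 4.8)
The plan is to sandwich $n_*$ between the number of grid intervals $I_k=[(k-1)/n,k/n)$ lying entirely inside $[A,B)$ and the number merely intersecting it. Write $N_{\mathrm{in}}$ for the count of indices $k$ with $I_k\subseteq[A,B)$ and $N_{\mathrm{touch}}$ for the count with $I_k\cap[A,B)\ne\emptyset$. Since the $I_k$ partition $[0,1)$ and each contains exactly one $v_i$, every fully contained $I_k$ donates its unique point to $[A,B)$, while every $v_i\in[A,B)$ lies in some intersecting $I_k$. This immediately gives
\[
N_{\mathrm{in}}\ \le\ n_*\ \le\ N_{\mathrm{touch}}.
\]

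First I would bound the two counts by the length $\beta$ using a measure argument. The contained intervals are disjoint and sit inside $[A,B)$, so $N_{\mathrm{in}}/n\le\beta$; since $N_{\mathrm{in}}$ is an integer this yields $N_{\mathrm{in}}\le\lfloor n\beta\rfloor$. Conversely the intersecting intervals cover $[A,B)$, because any $x\in[A,B)\subseteq[0,1)$ lies in one of the $I_k$, and that $I_k$ then intersects $[A,B)$; hence $N_{\mathrm{touch}}/n\ge\beta$ and so $N_{\mathrm{touch}}\ge\lceil n\beta\rceil$. On their own these are the one-sided bounds pointing in the unhelpful direction for controlling $n_*$, so the remaining ingredient is to compare $N_{\mathrm{in}}$ and $N_{\mathrm{touch}}$ directly.

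The crux is the claim $N_{\mathrm{touch}}\le N_{\mathrm{in}}+2$: at most two grid intervals can intersect $[A,B)$ without being contained in it. An intersecting $I_k=[(k-1)/n,k/n)$ fails to be contained exactly when $(k-1)/n<A$ or $k/n>B$. In the first case, intersection forces $k/n>A$, so $(k-1)/n<A<k/n$, which identifies the single interval whose open interior contains $A$; in the second case intersection forces $(k-1)/n<B$, so $(k-1)/n<B<k/n$, identifying the single interval whose interior contains $B$. Thus there is at most one offending interval on each side, hence at most two in all. This boundary bookkeeping, with its half-open-interval edge cases, is the one place the argument needs care: when $A$ or $B$ is an exact multiple of $1/n$, or when $B=1$, the relevant interval is either absent or already counted as contained, which only helps; and if a single $I_k$ swallows all of $[A,B)$ then $N_{\mathrm{in}}=0$ and $N_{\mathrm{touch}}=1$, still within the claim.

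Combining the pieces then finishes the proof. For the upper bound, $n_*\le N_{\mathrm{touch}}\le N_{\mathrm{in}}+2\le\lfloor n\beta\rfloor+2$. For the lower bound, $n_*\ge N_{\mathrm{in}}\ge N_{\mathrm{touch}}-2\ge\lceil n\beta\rceil-2$, which together give exactly \eqref{eq:nlbound}.
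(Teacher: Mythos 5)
Your proof is correct and takes essentially the same approach as the paper: both arguments sandwich $n_*$ between the number of grid cells $I_k$ contained in $[A,B)$ and the number intersecting it, and observe that these two counts differ by at most two (one stray cell at each endpoint). The only difference is bookkeeping — you relate $N_{\mathrm{in}}$ and $N_{\mathrm{touch}}$ to $\lfloor n\beta\rfloor$ and $\lceil n\beta\rceil$ by a direct measure/covering argument, whereas the paper writes $A=r/n+a$, $B=s/n+b$ and does explicit floor/ceiling arithmetic on $s-r$.
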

\begin{proof}
    See Appendix \ref{sec:proof:prop:boundnl}.
\end{proof}

\subsection{Local and star discrepancy}

For $\bsa\in[0,1)^d$, the anchored box $[\bszero,\bsa)$
has volume $\vol([\bszero,\bsa))=\prod_{j=1}^da_j$.
The local discrepancy of $n$ points $\bsx_1,\dots,\bsx_n\in[0,1)^d$ 
at $\bsa$ is
$$
\delta(\bsa) = \frac1n\sum_{i=1}^n\ind\{\bsx_i\in[\bszero,\bsa)\}
-\vol([\bszero,\bsa)).
$$
The star discrepancy of $\bsx_1,\dots,\bsx_n$ is
$$
D_n^* = D_n^*(\bsx_1,\dots,\bsx_n)=\sup_{\bsa\in[0,1)^d}|\delta(\bsa)|.
$$
The Koksma-Hlawka inequality \citep{hick:2014} is that
\begin{align}\label{eq:kh}
\Bigl| \frac1n\sum_{i=1}^nf(\bsx_i)-\int_{[0,1]^d}f(\bsx)\rd\bsx\Bigr|
\le D_n^*(\bsx_1,\dots,\bsx_n) V_{\hk}(f)
\end{align}
where $V_{\hk}$ is the total variation of $f$ in the
sense of Hardy and Krause.

\subsection{Digital nets}

Here we present the defining properties of digital nets.
They exist in integer bases $b\ge2$ \citep{dick:pill:2010}
but we only consider base $2$ here.
For a dimension $d\ge1$, 
elementary intervals in base $2$ are dyadic hyperrectangles of
the form
$$
H(\bsc,\bsk)=\prod_{j=1}^d\Bigl[\frac{c_j}{2^{k_j}},\frac{c_j+1}{2^{k_j}}\Bigr)
$$
for integers $k_j\ge0$ and  $0\le c_j<2^{k_j}$.
We use $|\bsk|=\sum_{j=1}^dk_j$, so $H(\bsc,\bsk)$ has
volume $2^{-|\bsk|}$. 

For integers  $m\ge t\ge0$ the $n=b^m$ points $\bsz_1,\dots,\bsz_n\in[0,1)^d$
are a $(t,m,d)$-net in base $b$ if
$H(\bsc,\bsk)$ contains precisely $nb^{-|\bsk|}=b^{m-|\bsk|}$ of them, whenever
$m-|\bsk|\ge t$.
Smaller $t$ are better.  We always
have $t\le m$ because $t=m$ just means that
all the points are in $[0,1)^d$.

A $(t,d)$-sequence in base $2$ is an infinite sequence
of values $\bsx_i\in[0,1]^d$, with the following property:
any subsequence $\bsx_{a2^r+1},\dots,\bsx_{(a+1)2^r}$
for integers $a\ge0$ and $r\ge m$ is a $(t,r,d)$-sequence
in base $2$.
With a $(t,d)$-sequence we get an infinite sequence
of $(t,m,d)$-nets.  Consecutive pairs of those $(t,m,d)$-nets
yield an infinite sequence of $(t,m+1,d)$-nets of which
pairs form $(t,m+2,d)$-nets
and so on without limit.
We will use digital nets in base $2$ taken from Sobol' sequences
using the direction numbers from \cite{joe:kuo:2008}.

We call a $(t,m,d)$-net in base $2$ stratified
if for each component $j=1,\dots,d$, the values
$x_{1j},\dots,x_{nj}$ are stratified.
A net with $t=0$ is automatically
stratified, but some digital nets with $t>0$ are also stratified.
The $(t,m,d)$-nets from Sobol' sequences are stratified.
The stratification property means we could use the $j$'th
variable in our $(t,m,d)$-net to select $\ell(i)$ for any $j$.
Stratification implies the `projection regularity' property
of \cite{sloa:joe:1994} in which we never get $x_{ij}=x_{i'j}$ for any $i\ne i'$
and any $j$.

The Sobol' points we use are ones that have been 
scrambled. The scrambles 
we consider are either nested uniform scrambles \citep{rtms}
or random linear scrambles with digital shifts \citep{mato:1998:2}.
Scrambling makes each $\bsz_i\sim\dunif[0,1]^d$ individually
while $\bsz_1,\dots,\bsz_n$ remain a $(t,m,d)$-net collectively.

We will estimate $\mu$ by $\hat\mu=(1/n)\sum_{i=1}^nf(\bsz_i)$
where $\bsz_1,\dots,\bsz_n$ are a scrambled $(t,m,s+1)$-net in
base $2$.
This scrambled net integration has several desirable properties
listed in \cite{owen:rudo:2020}:
\begin{compactenum}[\quad\bf1)]
\item If $f\in L_1[0,1]^{s+1}$, then $\e(\hat\mu)=\mu$.
\item If $f\in L_{1+\epsilon}[0,1]^{s+1}$ for some $\epsilon>0$ ,
then $\Pr( \lim_{n\to\infty} \hat\mu=\mu)=1$.
\item If $f\in L_2[0,1]^{s+1}$, then $\var(\hat\mu) =o(1/n)$
as $n\to\infty$.  
\item When every partial derivative of $f$ taken up to 
one time with respect to each component of $\bsz$ is in $L_2$
then $\var(\hat\mu)=O(n^{-3+\epsilon})$.
\item 
There exists a finite $\Gamma$
with $\var(\hat\mu)\le\Gamma\sigma^2/n$ where
$\sigma^2/n$ is the variance under ordinary Monte Carlo (MC)
sampling. 
\end{compactenum}
The scrambled Sobol' nets are stratified. 
\changed{Digital nets in base $2$ have sample sizes $n=2^m$ by definition.  
Arbitrary sample sizes $n$ are available as the first $n$ points of scrambled Sobol' sequences.  Property $1$ holds for any finite $n$ there
and Property 2 holds for any sequence $n\to\infty$ \citep{owen:rudo:2020}. The other properties are specific to $n=2^m$.}

From point 3 above we know that $\var(\hat\mu)=o(n^{-1})$
for $f\in L_2$. 
The Sobol' nets have
$D_n^*=O(n^{-1}\log(n)^{d-1})$ \cite[Theorem 4.10]{nied:1992}.
Scrambled nets are still nets, so this holds for them too.
Then by the Koksma-Hlawka inequality,
$\var(\hat\mu) = O(n^{-2+\epsilon})$ holds for any
$\epsilon>0$ when $V_{\hk}(f)<\infty$. \changed{Then $f$ is of bounded variation in the sense of Hardy and Krause}, which we
describe as $f$ being in BVHK.
For regular $f$ described in point 4 above
we have $\var(\hat\mu)=O(n^{-3+\epsilon})$.
Higher order digital nets can attain even better
convergence rates under correspondingly stronger
regularity \citep{dick:2011}, though they are not
widely used for higher dimensions.
It is common to write these powers of $\log(n)$
as $O(n^\epsilon)$. For a discussion of that
point, see \cite{wherearethelogs}.

While \changed{variance rates} $O(n^{-\rho})$ for $\rho\in\{1,2,3\}$ are widely
studied, these come from error bounds. The actual convergence rate need 
not be near an integer.
\cite{liu:2025} shows examples where certain singular behavior
of the integrand can give any rate between $1$ and $2$.  In
applications, the precise rate of convergence might not be known
before computation begins.

\subsection{Discrepancy within each stratum}\label{sec:stratumdiscrep}

Here we consider how uniformly distributed
the points in each stratum are. 
The first case we consider has $\beta_\ell$
decreasing with $\ell$ and equal to negative powers of $2$.
We show how to find such $\beta_\ell$ in Section~\ref{sec:powtwo}.
Then the sample points within any given stratum
are scrambled nets, and the error can be
as good as $O(n^{-3+\epsilon})$ for regular
integrands.  After that, we consider 
arbitrary $\beta_\ell>2/n$.  Then the points in a given
stratum still have small discrepancy but they are not
generally scrambled digital nets.  In that case the
variance rate in stratum $\ell$ is $O(n_\ell^{-2+\epsilon})$ for 
within-stratum integrands that are in BVHK.

\begin{proposition}\label{prop:stratanets}
  For $n=2^m$ and $s\ge1$, let $\bsz_1,\dots,\bsz_{n}\in[0,1]^{s+1}$ be a
  stratified $(t,m,s+1)$-net in base $2$.
  For $L\ge2$,  let $\beta_\ell = 2^{-\kappa_\ell}$ for $1\le\kappa_1\le\kappa_2\le\cdots\le \kappa_L< m$ satisfy $\sum_{\ell=1}^L\beta_\ell=1$.
  Write $\bsz_i=(v_i,u_{i1},\dots,u_{is})$ for $\bsu_i\in[0,1]^s$. Then for each $\ell=1,\dots,L$, the points $\bsu_i$ for
  which $\bsell(v_i)=\ell$ form a $(t_\ell,m_\ell,s)$-net in base $2$
  with $m_\ell=m-\kappa_\ell$ and $t_\ell=\min(t,m_\ell)$.
\end{proposition}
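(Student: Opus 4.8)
The plan is to exploit two structural facts: that power-of-two stratum probabilities combined with the ordering $\kappa_1\le\cdots\le\kappa_L$ force each stratum to be a dyadic slab in the first coordinate, and that counting points of the full net inside such a slab crossed with an elementary interval is itself controlled by the $(t,m,s+1)$-net property. Everything then reduces to translating an elementary interval in $[0,1)^s$ into a companion elementary interval in $[0,1)^{s+1}$.

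First I would verify that each stratum is an elementary interval in the $v$ coordinate. Since $\beta_\ell=2^{-\kappa_\ell}$ and $\kappa_k\le\kappa_\ell$ for every $k<\ell$, each term $2^{-\kappa_k}$ appearing in $B_{\ell-1}=\sum_{k<\ell}2^{-\kappa_k}$ is an integer multiple of $2^{-\kappa_\ell}$, so $B_{\ell-1}=c_\ell/2^{\kappa_\ell}$ for some integer $c_\ell$. Hence the selection rule $B_{\ell-1}\le v<B_\ell$ is exactly $v\in[c_\ell/2^{\kappa_\ell},(c_\ell+1)/2^{\kappa_\ell})$, a dyadic interval of width $2^{-\kappa_\ell}$. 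This step is where the ordering hypothesis is essential and is the one I expect to require the most care, since without $\kappa_k\le\kappa_\ell$ the partial sum $B_{\ell-1}$ need not land on the $2^{-\kappa_\ell}$ grid and the stratum would fail to be an elementary interval.

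Next I would fix a stratum $\ell$ and an arbitrary elementary interval $H(\bsc',\bsk')\subset[0,1)^s$, and form the companion box $H(\bsc,\bsk)\subset[0,1)^{s+1}$ whose first coordinate is the stratum interval, $(c_1,k_1)=(c_\ell,\kappa_\ell)$, and whose remaining $s$ coordinates copy $(\bsc',\bsk')$. Then $|\bsk|=\kappa_\ell+|\bsk'|$, and a point $\bsu_i$ with $\bsell(v_i)=\ell$ lies in $H(\bsc',\bsk')$ precisely when $\bsz_i\in H(\bsc,\bsk)$; so the within-stratum count of $H(\bsc',\bsk')$ equals the count of the full net in $H(\bsc,\bsk)$.

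Finally I would invoke the net property together with stratification to handle the two regimes. Whenever $m_\ell-|\bsk'|\ge t$ we have $m-|\bsk|=m_\ell-|\bsk'|\ge t$, so the $(t,m,s+1)$-net places exactly $2^{m-|\bsk|}=2^{m_\ell-|\bsk'|}$ points in $H(\bsc,\bsk)$, which is the balance condition for a $(t,m_\ell,s)$-net; taking $|\bsk'|=0$ also gives the total count $n_\ell=2^{m_\ell}$. When $m_\ell\ge t$ this verifies every box the definition of a $(t_\ell,m_\ell,s)$-net with $t_\ell=t$ requires. When $m_\ell<t$, so that $t_\ell=m_\ell$, the only box the definition demands is $|\bsk'|=0$, the whole cube; there I would instead use stratification of the net, noting that the $\ell$-th dyadic interval is a union of $2^{m_\ell}$ length-$2^{-m}$ stratification cells, each holding exactly one $v_i$, so again $n_\ell=2^{m_\ell}$. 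Combining the two regimes gives $t_\ell=\min(t,m_\ell)$ and completes the argument.
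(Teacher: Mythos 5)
Your proof is correct and follows essentially the same route as the paper's: the ordering of the $\kappa_\ell$ makes each stratum a dyadic interval in the first coordinate, an elementary interval in $[0,1)^s$ lifts to a companion elementary interval in $[0,1)^{s+1}$ with $|\bsk|=\kappa_\ell+|\bsk'|$, and the $(t,m,s+1)$-net property gives the required counts. The only (immaterial) difference is organizational: the paper invokes stratification up front to get $n_\ell=2^{m_\ell}$ in all cases, whereas you use it only in the degenerate regime $m_\ell<t$, with your explicit treatment of $t_\ell=\min(t,m_\ell)$ being slightly more careful than the paper's.
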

\begin{proof}
See Appendix~\ref{sec:proof:prop:stratanets}.
\end{proof}

To motivate the constraint that $\beta_1\ge\beta_2\ge\cdots\ge\beta_L$, consider
what would happen if we took $\beta_1=1/8$,
$\beta_2=1/2$, $\beta_3=1/4$ and $\beta_4=1/8$.
Then the points for the second stratum would arise
from $v\in [1/8,5/8)$. This is not an elementary interval
in base $2$, but is instead the union of four such
sets each of length $1/8$. Using the proper ordering
$\beta_1=1/2$, $\beta_2=1/4$, $\beta_3=1/8$ and $\beta_4=1/8$,
we get intervals $[0,1/2)$, $[1/2,3/4)$, $[3/4,7/8)$ and $[7/8,1)$
for the four strata and these are all elementary intervals in base $2$.
The digital net property assures a better equidistribution property
for an elementary interval than it does for an equal sized union
of two or more elementary intervals.

Now we consider what happens when $\beta_\ell$ are
not negative powers of two. We get low discrepancy
within strata but we do not get digital nets there.

\begin{proposition}\label{prop:stilllowdiscrep}
Let $\beta_1,\dots,\beta_L\in(0,1)$ sum to one. 
Let $\bsz_1,\dots,\bsz_n$ be a stratified
$(t,m,s+1)$-net in base $2$ with $n>2/\min_\ell\beta_\ell$.   
Then $n_\ell>0$ and the star discrepancy
of $\{\bsu_i\mid i\in S_\ell\}$ is $O(n_\ell^{-1}\log(n_\ell)^{s})$.
\end{proposition}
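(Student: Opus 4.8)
The plan is to treat the two conclusions separately: the positivity of $n_\ell$ follows directly from Proposition~\ref{prop:boundnl}, while the discrepancy bound follows by viewing the stratum points as an axis-parallel slice of the ambient net and invoking that net's star discrepancy.

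First I would dispatch $n_\ell>0$. The hypothesis $n>2/\min_\ell\beta_\ell$ means $n\beta_\ell>2$ for every $\ell$. Since the net is stratified, its first coordinates $v_1,\dots,v_n$ are stratified, so Proposition~\ref{prop:boundnl} applied with $[A,B)=[B_{\ell-1},B_\ell)$ gives $\lceil n\beta_\ell\rceil-2\le n_\ell\le\lfloor n\beta_\ell\rfloor+2$. From $n\beta_\ell>2$ we get $n_\ell\ge\lceil n\beta_\ell\rceil-2\ge1$, and the same two-sided bound yields $|n_\ell-n\beta_\ell|\le2$, which I will reuse.

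For the discrepancy, fix $\ell$ and an anchor $\bsa\in[0,1)^s$, and let $N_\ell(\bsa)=\#\{i\in S_\ell:\bsu_i\in[\bszero,\bsa)\}$, so the local discrepancy of the stratum points is $\delta_\ell(\bsa)=N_\ell(\bsa)/n_\ell-\vol([\bszero,\bsa))$. The key step is that $i\in S_\ell$ means $v_i\in[B_{\ell-1},B_\ell)=[0,B_\ell)\setminus[0,B_{\ell-1})$, so $N_\ell(\bsa)$ is the difference of the counts of the full net in the two anchored boxes $[\bszero,(B_\ell,\bsa))$ and $[\bszero,(B_{\ell-1},\bsa))$ of $[0,1]^{s+1}$. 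Writing each count as $\#\{i:\bsz_i\in[\bszero,(B,\bsa))\}=nB\,\vol([\bszero,\bsa))+n\,e(B)$ with $|e(B)|\le D_n^*$, subtraction gives
\begin{align*}
N_\ell(\bsa)=n\beta_\ell\,\vol([\bszero,\bsa))+n\bigl(e(B_\ell)-e(B_{\ell-1})\bigr),\qquad |e(B_\ell)-e(B_{\ell-1})|\le2D_n^*.
\end{align*}
Hence
\begin{align*}
\delta_\ell(\bsa)=\Bigl(\tfrac{n\beta_\ell}{n_\ell}-1\Bigr)\vol([\bszero,\bsa))+\tfrac{n}{n_\ell}\bigl(e(B_\ell)-e(B_{\ell-1})\bigr),
\end{align*}
whose first term is at most $|n\beta_\ell-n_\ell|/n_\ell\le2/n_\ell$ and whose second term is at most $(2n/n_\ell)D_n^*=(2/n_\ell)\,nD_n^*$. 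Using the standard bound $D_n^*=O(n^{-1}\log(n)^s)$ for a net in dimension $s+1$ quoted in Section~\ref{sec:notation}, we have $nD_n^*=O(\log(n)^s)$, so $\sup_{\bsa}|\delta_\ell(\bsa)|=O(n_\ell^{-1}\log(n)^s)$. Finally, because $\beta_\ell$ is fixed and $n_\ell\ge n\beta_\ell-2$ forces $n\le(n_\ell+2)/\beta_\ell$, we have $\log(n)=\log(n_\ell)+O(1)$, so $\log(n)^s=O(\log(n_\ell)^s)$, giving the claimed $O(n_\ell^{-1}\log(n_\ell)^s)$.

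The step I expect to require the most care is the boundary case $\ell=L$, where $B_L=1$ puts the anchor $(B_L,\bsa)$ on the face of the cube outside $[0,1)^{s+1}$; there the box $[0,B_L)\times[\bszero,\bsa)$ contains all points with $\bsu_i\in[\bszero,\bsa)$, so I would replace $e(B_L)$ by the local discrepancy of the projection of the net onto its last $s$ coordinates, which is again $O(n^{-1}\log(n)^s)$ by the same net bound. The remaining details, namely the passage from $\log(n)$ to $\log(n_\ell)$ and keeping the $O$-constants uniform over the finitely many strata, are routine.
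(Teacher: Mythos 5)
Your proposal is correct and follows essentially the same route as the paper's proof: both express the within-stratum local discrepancy as the difference of two anchored-box local discrepancies of the ambient $(s+1)$-dimensional net, bound each by $D_n^*=O(n^{-1}\log(n)^s)$, and use $|n_\ell-n\beta_\ell|\le 2$ from Proposition~\ref{prop:boundnl} to control the remaining term and convert $\log(n)$ to $\log(n_\ell)$. Your explicit treatment of $n_\ell>0$ and of the $\ell=L$ boundary anchor are minor additions the paper leaves implicit.
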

\begin{proof}
See Appendix~\ref{sec:proof:prop:stilllowdiscrep}.
\end{proof}

\section{Choosing $\beta_\ell$ for given $\alpha_\ell$}\label{sec:inefficiencies}

Here we develop some principled criteria for choosing $\beta_\ell$.
Our working model is that
\begin{align}\label{eq:varmodel}
\var(\hat\mu_\ell) = \tau_\ell n_\ell^{-\rho}
\end{align}
for some $\rho>0$ and $\tau_\ell>0$.  For simplicity, we assume that
every stratum has the same rate~$\rho$, apart from
a remark at the end of Section~\ref{subsec:optimal}.  The appropriate value of $\rho$
is not necessarily known at the time we choose $\beta_\ell$.
We will show that choosing $\beta_\ell$ based on the wrong
value of $\rho$ can adversely affect the implied constant
in the asymptotic MSE while leaving the convergence rate
unchanged.

Values of $\rho$ equal to $1$, $2$ and $3$
correspond to upper bounds on the convergence
rates for MC, QMC and RQMC variances,
assuming $f\in L^2$, $V_{\hk}(f)<\infty$ and the aforementioned
regularity condition on $f$, respectively, ignoring logarithmic
factors in the latter two cases.  It often happens that the empirical
MSE of an RQMC rule follows an apparent rate of $n^{-\rho}$
for a value of $\rho$ that is not close to an integer.   Unbounded
integrands are not in BVHK, and yet they will have
an MSE that is $o(n^{-1})$. Those often appear to have
$1<\rho<2$.

Our sampling algorithm will deliver $n_\ell$ 
with $|n_\ell -n\beta_\ell|\le2$ by Proposition~\ref{prop:boundnl}.
For $\rho<2$,
we will work as if $n_\ell = n\beta_\ell$, exactly.
Changing one or two points per stratum will make
a difference of $O(1/n)$ to the integral estimate
for a bounded integrand and will have a root mean
squared difference of this magnitude for finite
variances.  The overall root mean squared error (RMSE) is $O(n^{-\rho/2})$
and so for $\rho<2$ these small differences in $n_\ell$
are negligible.

For $\rho>2$, we have the reverse situation. Changing $n_\ell$
by even $\pm1$ can worsen the convergence rate because
changing $\hat\mu$ by $O(1/n)$ is a bigger change
than the RMSE of $O(n^{-\rho/2})$. A convergence
rate with $\rho>2$ for scrambled Sobol' points is only
available for sample sizes that are powers of two.
Section~\ref{sec:powtwo} shows how to \changed{enforce that all} $n_\ell$
\changed{are} equal to powers of two.

Either way, we can assume that $n_\ell=n\beta_\ell$. Then with $\omega_\ell=\alpha_\ell/\beta_\ell$, we can rewrite~\eqref{eq:defhatmu} as
$$
\hat\mu = \sum_{\ell=1}^L\omega_\ell\frac{n_\ell}n\hat\mu_\ell
= \sum_{\ell=1}^L\alpha_\ell\hat\mu_\ell.
$$
The importance sampling ratio $\omega_\ell$ has
canceled from the expression for $\hat\mu$, leaving
a simply weighted sum of within-stratum means.

\subsection{Optimal sample sizes}\label{subsec:optimal}
Under the model~\eqref{eq:varmodel},
$\var(\hat\mu) = \var\bigl(\sum_{\ell=1}^L\alpha_\ell\hat\mu_\ell\bigr)
  \le \bigl(
\sum_{\ell=1}^L\alpha_\ell \tau_\ell^{1/2}n_\ell^{-\rho/2}
  \bigr)^2
$
where the upper bound is from the very conservative assumption that
the estimates within all $L$ strata have unit correlations.
\changed{RQMC points have a space filling property: 
the values of $\bsu_i\in[0,1]^s$ sampled for $\bsell(v_i)=\ell$
tend to fill in the gaps left behind by $\bsu_i$ for $\bsell(v_i)\ne \ell$. Then a zero or even negative correlation among stratum means is more plausible. Under a working model of $0$ inter-stratum correlation, $\var(\hat\mu)\approx \sum_{\ell=1}^L\alpha_\ell^2\tau_\ell/n_\ell^\rho$.}

\changed{Then with no prior information on the relative sizes of $\tau_\ell$ we adopt the criterion
\begin{align}\label{eq:crit0}
C_0(\rho)=C_0(\rho;n_1,\dots,n_L)=\sum_{\ell=1}^L \alpha_\ell^2 n_\ell^{-\rho}.
\end{align}
The conservative criterion assuming unit correlations is $C_1(\rho)=\sum_{\ell=1}^L\alpha_\ell n_\ell^{-\rho/2}$.}

Forming the Lagrangian for~\eqref{eq:crit0} with the constraint $\sum_\ell n_\ell=n$ and setting to zero the partial derivatives
with respect to $n_\ell$ leads to
\begin{align}\label{eq:criterion0}
n_\ell \propto \alpha_\ell^{2/(\rho+1)}.
\end{align} 
We then suppose that 
the user chooses integers $n_\ell$ summing
to $n=2^m$ that are nearly
proportional to $\alpha_\ell^{2/(\rho+1)}$.
The conservative criterion 1 leads to $n_\ell\propto\alpha_\ell^{2/(\rho+2)}$.


\changed{For both criteria}
higher convergence rates $\rho$ tend to equalize the $n_\ell$.
We can think of $n_\ell=n/L$ as arising in
the limit as $\rho\to\infty$. We will also see
equal $n_\ell$ in a minimax argument in Section~\ref{sec:minimax}.
\changed{The conservative criterion 1 favors stratum sizes more nearly equal than under criterion 0 while not exactly equal as the even more conservative minimax argument gives.}

We now have six principled ways to choose how $n_\ell$ should scale
with $\alpha_\ell$, for criteria $0$ and $1$, and three different
hypotheses on $\rho$. They lead to the four different 
rates given in Table~\ref{tab:criterions}.
The case for criterion $0$ and $\rho=1$ (uncorrelated estimates at
the MC rate) is the only one with a proportional
allocation.  The other choices all allocate more than
a proportional number of samples to the
strata with smaller $\alpha_\ell$. This is also true 
for the plain MC rate ($\rho=1$) under the conservative criterion 1.

When $\alpha_1=\cdots=\alpha_L=1/L$, then 
\changed{our criteria} do not distinguish between different values of $\rho$. After raising $\alpha_\ell$ to any power they are still equal and then
a near equal integer allocation is appropriate.

When we do have information about the relative sizes of $\tau_\ell$,
then we can replace $\alpha_\ell$ by $\alpha_\ell \tau_\ell^{1/2}$ in the formulas
in Table~\ref{tab:criterions}.  In that case, the proportionality for $C_0$
and $\rho=1$ matches the Neyman allocation from survey sampling,
in \cite{neym:1934}.  When the cost to sample in stratum $\ell$ is
proportional to $c_\ell>0$ and we constrain the total cost $\sum_{\ell=1}^Lc_\ell n_\ell$, then the Lagrange multiplier argument leads us to replace
$\alpha_\ell$ by $\alpha_\ell \tau_\ell^{1/2} c_\ell^{-1/2}$ in Table~\ref{tab:criterions}.
\changed{When using this extra information, we should order the strata to be in non-increasing order of $\alpha_\ell \tau_\ell^{1/2} c_\ell^{-1/2}$ for either criterion.} 

\begin{table}
\centering
\begin{tabular}{lccc}
\toprule
$\rho:$ & 1 & 2 & 3 \\
\midrule
$C_0$ & $\alpha_\ell$ & $\alpha_\ell^{2/3}$ & $\alpha_\ell^{1/2}$ \\[1ex]
$C_1$ & $\alpha_\ell^{2/3}$ & $\alpha_\ell^{1/2}$ & $\alpha_\ell^{2/5}$\\
\bottomrule
\end{tabular}
\caption{\label{tab:criterions}
This table shows ideal rates for $n_\ell$ as functions of $\alpha_\ell$ \changed{with $C_0$ and $C_1$ for criteria $0$ and $1$ respectively.}
These rates assume the variance within stratum $\ell$
is proportional to $n_\ell^{-\rho}$.
} 
\end{table}

\changed{If $\rho$ varies from stratum to stratum, taking the value $\rho_\ell$ in stratum $\ell$, then we anticipate that larger $\rho_\ell$ will generally lead to smaller $n_\ell$. With better convergence, fewer points should be needed, just as in the common $\rho$ case where smaller $\tau_\ell$ leads to smaller $n_\ell$. We do not have a rigorous proof for the varying $\rho_\ell$ case, but Appendix~\ref{sec:varyrho} gives a heuristic argument.}

\subsection{Suboptimality}
Now suppose that $\var(\hat\mu_\ell)\propto n_\ell^{-\rho}$ but due to imperfect
knowledge, we work as if $\var(\hat\mu_\ell)\propto n_\ell^{-\gamma}$.
We measure inefficiency by the ratio of the MSE we get if we design with a value $\gamma$ when we should
have used $\rho$, so smaller values are better, and $1$ is best possible.

Under criterion 0, we would then 
take $n_\ell \propto \alpha_\ell^{2/(\gamma+1)}$.
The inefficiency of using $\gamma$ given that we should have
used $\rho$ is
\begin{align}\label{eq:i0ineff}
I_0(\gamma\giv \rho)
&=
\frac{\sum_{\ell=1}^L\alpha^2_\ell \Bigl(
n \alpha_\ell^{2/(\gamma+1)}\bigm/\sum_{k=1}^L\alpha_k^{2/(\gamma+1)}\Bigr)^{-\rho}
}{\sum_{\ell=1}^L\alpha^2_\ell \Bigl(
n \alpha_\ell^{2/(\rho+1)}\bigm/\sum_{k=1}^L\alpha_k^{2/(\rho+1)}\Bigr)^{-\rho}}
\notag\\
&=
\sum_{\ell=1}^L\alpha_\ell^{2-2\rho/(\gamma+1)} 
\frac{\bigl(\sum_{k=1}^L\alpha_k^{2/(\gamma+1)}\bigr)^\rho}
{\bigl(\sum_{k=1}^L\alpha_k^{2/(\rho+1)}\bigr)^{\rho+1}}.
\end{align}
The factor $n^{-\rho}$ has canceled from the numerator and denominator
and so we get the same rate of convergence with an incorrect $\gamma$
as we would with the optimal $\gamma=\rho$. Only the lead constant
is affected.
\changed{The analogous formula for criterion 1 is
\begin{align*}
I_1(\gamma\giv \rho)
&=\frac{
\bigl(\sum_{\ell=1}^L\alpha_\ell^{1-\rho/(\gamma+2)} \bigr)^2
\bigl(\sum_{k=1}^L\alpha_k^{2/(\gamma+2)}\bigr)^\rho}
{\bigl(\sum_{k=1}^L\alpha_k^{2/(\rho+2)}\bigr)^{\rho+2}}.
\end{align*}
}

\changed{Under criterion 0}
a cautious choice is to take $n_\ell\propto \alpha_\ell^{2/(\gamma_0+1)}$ where
\begin{align}\label{eq:minmax0}
\gamma_0 = \argmin_{1\le\gamma\le3} \max_{1\le\rho\le 3}I_0(\gamma\giv\rho).
\end{align}
When we know that $\rho\le2$ we might only let $\rho$ and $\gamma$ belong to $[1,2]$. 


The minimax approach of equation~\eqref{eq:minmax0} 
gives us ways to choose $\gamma$ for a given list
of values $\alpha_1,\dots,\alpha_L$.
It often happens that the worst value of $\rho$
for given $\gamma$ is at one of the allowed
extremes, but this does not always happen.
We have seen such exceptions when one of the $\alpha_\ell$
is much larger than the others.  In our numerical examples the maximum over $\rho$ of $I_0(2\giv\rho)$ or $I_0(3\giv\rho)$ was not much larger
than $I_0(\gamma_0\giv\rho)$.

Next we consider recommendations that can be made
generally without specific values of $\alpha_\ell$ in mind.
The first thing to notice is that both $I_0$ and $I_1$ contain
a numerator sum that could potentially have a negative
power of $\alpha_\ell$.  For general use where some $\alpha_\ell$
might be small, we want to avoid that possibility.  There 
is no possibility of an even more negative power in the denominator to compensate.
To keep a positive exponent in the numerator of $I_0$ we must have
$2-2\rho/(\gamma+1)\ge0$.
After rearrangement this becomes a constraint
that $\gamma\ge \rho-1$, whatever the unknown $\rho$
might be. When $\rho\doteq 3$ is a possibility,
then we should take $\gamma\ge2$.  
If instead we use $\gamma=1$ in a $\rho=3$ setting
then the first factor in $I_0$ is greater than $\alpha_L^{-1}$.
For $I_1$ we want $1-\rho/(\gamma+2)\ge0$
or $\gamma\ge\rho-2$.  Since we already had $\gamma\ge\rho-1$
using criterion 0, this is not a further constraint.
When we know that $\rho\le2$, then any $\gamma\ge1$
avoids a negative exponent for $\alpha_\ell$.

When $\rho=1$ (MC rate), then
\begin{align*}
I_0(2\giv 1) &= \sum_{\ell=1}^L\alpha_\ell^{4/3}\sum_{\ell=1}^L\alpha_\ell^{2/3},\quad\text{and}\\
I_0(3\giv 1) &= \sum_{\ell=1}^L\alpha_\ell^{3/2}\sum_{\ell=1}^L\alpha_\ell^{1/2}.
\end{align*}
It is not difficult to show that $I_0(3\giv1)\ge I_0(2\giv1)\ge I_0(1\giv1)=1$.
Proposition~\ref{prop:monotone} shows that more general monotonicities hold.

\begin{proposition}\label{prop:monotone}
For $L>1$, let $\alpha_\ell>0$ for $\ell=1,\dots,L$ sum to one.
If $\rho\ge1$ then
\begin{align}\label{eq:monosubopt}
I_0(\gamma_1\giv\rho)\ge I_0(\gamma_2\giv\rho)\ge I_0(\rho\giv\rho)
\end{align}
whenever either $\gamma_1\ge\gamma_2\ge\rho$
or $1\le\gamma_1\le\gamma_2\le\rho$.
\end{proposition}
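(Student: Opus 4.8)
The plan is to exploit the fact that the denominator $\bigl(\sum_{k=1}^L\alpha_k^{2/(\rho+1)}\bigr)^{\rho+1}$ appearing in $I_0(\gamma\giv\rho)$ at~\eqref{eq:i0ineff} does not involve $\gamma$. With $\rho$ fixed, it therefore suffices to study the monotonicity in $\gamma$ of the $\gamma$-dependent factor
$$
J(\gamma) = \Bigl(\sum_{\ell=1}^L\alpha_\ell^{2-2\rho/(\gamma+1)}\Bigr)\Bigl(\sum_{k=1}^L\alpha_k^{2/(\gamma+1)}\Bigr)^{\rho},
$$
because $I_0(\gamma\giv\rho)$ is a positive $\gamma$-free multiple of $J(\gamma)$, and $I_0(\rho\giv\rho)$ corresponds to $J(\rho)$. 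The substitution $t=2/(\gamma+1)$ then turns this into a cleaner problem: as $\gamma$ increases over $[1,\infty)$ the parameter $t$ decreases over $(0,1]$, the value $\gamma=\rho$ corresponds to $t_0=2/(\rho+1)\le1$, and $J$ becomes
$$
\tilde J(t) = \Bigl(\sum_{\ell=1}^L\alpha_\ell^{2-\rho t}\Bigr)\Bigl(\sum_{k=1}^L\alpha_k^{t}\Bigr)^{\rho}.
$$

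Next I would differentiate the logarithm. Writing $m(s) = \bigl(\sum_{k}\alpha_k^{s}\log\alpha_k\bigr)\big/\bigl(\sum_{k}\alpha_k^{s}\bigr)$, a direct calculation using $\frac{\mrd}{\mrd t}\alpha_\ell^{2-\rho t}=-\rho\,\alpha_\ell^{2-\rho t}\log\alpha_\ell$ and $\frac{\mrd}{\mrd t}\alpha_k^{t}=\alpha_k^{t}\log\alpha_k$ yields
$$
\frac{\mrd}{\mrd t}\log\tilde J(t) = \rho\bigl[m(t)-m(2-\rho t)\bigr].
$$
The crux is recognizing $m(s)$ as the mean of the random variable $\log\alpha_K$ when $K$ is drawn with probability $p_k(s)\propto\alpha_k^{s}$. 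This is an exponential family in the natural parameter $s$ with sufficient statistic $\log\alpha_K$, so $m'(s)=\var_{p(s)}(\log\alpha_K)\ge0$, whence $m$ is nondecreasing (and strictly so unless all $\alpha_\ell$ are equal; in that degenerate case every $I_0$ equals $1$ and the claim is trivial). Consequently the sign of $m(t)-m(2-\rho t)$ matches that of $t-(2-\rho t)=(\rho+1)t-2$, which is negative for $t<t_0$ and positive for $t>t_0$. Since $\rho\ge1>0$, this shows $\tilde J$ is nonincreasing on $(0,t_0]$ and nondecreasing on $[t_0,1]$, with its minimum at $t_0$.

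Finally I would translate back, carefully tracking the order reversal induced by the substitution. If $\gamma_1\ge\gamma_2\ge\rho$ then $t_1\le t_2\le t_0$, so the nonincreasing branch gives $\tilde J(t_1)\ge\tilde J(t_2)\ge\tilde J(t_0)$; if $1\le\gamma_1\le\gamma_2\le\rho$ then $t_0\le t_2\le t_1\le1$, so the nondecreasing branch gives $\tilde J(t_1)\ge\tilde J(t_2)\ge\tilde J(t_0)$. In either case, multiplying by the positive $\gamma$-free denominator factor yields $I_0(\gamma_1\giv\rho)\ge I_0(\gamma_2\giv\rho)\ge I_0(\rho\giv\rho)$, which is~\eqref{eq:monosubopt}.

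The main obstacle is the middle step: spotting that $m$ is a tilted mean, so that its monotonicity follows for free from nonnegativity of a variance, rather than attempting to sign $\frac{\mrd}{\mrd t}\log\tilde J$ by brute force. Alongside this, the only real bookkeeping risk is the sign-reversing nature of $t=2/(\gamma+1)$, which exchanges the two hypothesis cases between the two monotone branches of $\tilde J$ and must be applied consistently.
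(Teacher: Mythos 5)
Your proof is correct, and it shares its skeleton with the paper's: both drop the $\gamma$-free denominator of $I_0$, reparametrize by $2/(\gamma+1)$ (your $t$ is the paper's $2/(\rho+1)+x$), and reduce the claim to showing that the resulting one-parameter function decreases toward and increases away from the point corresponding to $\gamma=\rho$. Where you genuinely diverge is in how the derivative is signed. The paper differentiates $h(x)=S_-(0)S_+(0)^\rho$ directly, expands $S_+(1)S_-(0)-S_+(0)S_-(1)$ as a double sum over index pairs $(i,j)$, symmetrizes, and checks that each paired term $x\log(\alpha_j/\alpha_i)c_{ij}$ is nonnegative --- an elementary Chebyshev-type rearrangement argument. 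You instead take the logarithmic derivative, obtaining $\rho\bigl[m(t)-m(2-\rho t)\bigr]$ with $m$ the exponentially tilted mean of $\log\alpha_K$, and invoke the exponential-family identity $m'(s)=\var_{p(s)}(\log\alpha_K)\ge0$ to get monotonicity of $m$ for free. The two key inequalities are equivalent (the paper's double-sum positivity is exactly the statement $m(t)\ge m(2-\rho t)$ when $t\ge 2-\rho t$), but your route is more conceptual and shorter, at the cost of importing the cumulant-function fact; the paper's is self-contained and purely algebraic. Your bookkeeping of the order reversal under $t=2/(\gamma+1)$, and the observation that only weak monotonicity of $m$ is needed (with the equal-$\alpha_\ell$ case being trivial), are both handled correctly.
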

\begin{proof}
See Appendix~\ref{sec:proof:prop:monotone}.
\end{proof}

If the unknown true rate is $\rho=2$, we are better off
choosing $\gamma=3$ than $\gamma=1$. Proposition~\ref{prop:beton3vs1whenits2}
has this `bet on optimism' result.

\begin{proposition}\label{prop:beton3vs1whenits2}
For any $\alpha_\ell>0$ with $\sum_{\ell=1}^L\alpha_\ell=1$,
$$
I_0(3\giv 2)\le I_0(1\giv 2).
$$
\end{proposition}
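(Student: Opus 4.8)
The plan is to exploit the fact that $I_0(3\giv2)$ and $I_0(1\giv2)$ share a common denominator and to reduce the claim to an elementary inequality. Note first that Proposition~\ref{prop:monotone} does not apply here: it requires $\gamma_1$ and $\gamma_2$ to lie on the same side of $\rho$, whereas $\gamma=1$ and $\gamma=3$ straddle $\rho=2$. So a separate, direct computation is warranted.

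First I would substitute $\rho=2$ into the closed form for $I_0$ from~\eqref{eq:i0ineff}, giving
$$
I_0(\gamma\giv2)=\Bigl(\sum_{\ell=1}^L\alpha_\ell^{2-4/(\gamma+1)}\Bigr)\frac{\bigl(\sum_{k=1}^L\alpha_k^{2/(\gamma+1)}\bigr)^2}{\bigl(\sum_{k=1}^L\alpha_k^{2/3}\bigr)^3}.
$$
The denominator $\bigl(\sum_k\alpha_k^{2/3}\bigr)^3$ is independent of $\gamma$, so it cancels when we compare the two cases. I would then evaluate the two numerators. For $\gamma=3$ the exponent $2-4/(\gamma+1)=1$, so the first sum collapses to $\sum_\ell\alpha_\ell=1$ and the numerator is $\bigl(\sum_k\alpha_k^{1/2}\bigr)^2$. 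For $\gamma=1$ the exponent $2-4/(\gamma+1)=0$, so the first sum equals $L$, while $\sum_k\alpha_k^{2/(\gamma+1)}=\sum_k\alpha_k=1$, leaving numerator $L$. Hence the claimed inequality $I_0(3\giv2)\le I_0(1\giv2)$ is equivalent to
$$
\Bigl(\sum_{k=1}^L\alpha_k^{1/2}\Bigr)^2\le L.
$$

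Finally, this follows from Cauchy--Schwarz: $\sum_{k=1}^L\alpha_k^{1/2}=\sum_{k=1}^L 1\cdot\alpha_k^{1/2}\le\sqrt{L}\,\bigl(\sum_k\alpha_k\bigr)^{1/2}=\sqrt{L}$, using $\sum_k\alpha_k=1$, and squaring gives the bound. There is essentially no hard obstacle here; the only step requiring care is the arithmetic of the exponents, since it is precisely the special value $\rho=2$ that makes $2-4/(\gamma+1)$ equal to $1$ and $0$ at $\gamma=3$ and $\gamma=1$, causing the two numerators to collapse to $\bigl(\sum_k\alpha_k^{1/2}\bigr)^2$ and $L$. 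Everything else is the one-line application of Cauchy--Schwarz.
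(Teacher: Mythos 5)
Your proposal is correct and matches the paper's own argument: both substitute $\rho=2$ into~\eqref{eq:i0ineff}, observe that $I_0(3\giv2)$ and $I_0(1\giv2)$ share the denominator $\bigl(\sum_{\ell}\alpha_\ell^{2/3}\bigr)^3$, and reduce the claim to $\bigl(\sum_{\ell}\alpha_\ell^{1/2}\bigr)^2\le L$ via Cauchy--Schwarz. Your version just spells out the exponent arithmetic more explicitly; there is no substantive difference.
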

\begin{proof}
See Appendix~\ref{sec:proof:prop:beton3vs1whenits2}.
\end{proof}

\section{Making each $n_\ell$ a power of two}\label{sec:powtwo}

Here we consider how to allocate sample sizes $n_\ell=2^{m_\ell}$ that 
sum to a given total sample size $n=2^m$.
We assume throughout that $n\ge L$ and that we
want every $n_\ell\ge1$. 
The advantage of having every $n_\ell$ be a power of two is that
we then get scrambled net sampling within each of the $L$
strata as shown by Proposition~\ref{prop:stratanets}.  Sample sizes that are a power of two are a best practice
in scrambled Sobol' sampling \citep{firstsobol}.  The
results with MSE $O(n^{-3+\epsilon})$ for scrambled Sobol'
points require such sample sizes. 

A potential non-asymptotic disadvantage of having $n_\ell$ be a power of 
$2$ is that we have many fewer sampling ratios 
$\alpha_\ell/\beta_\ell$ at our disposal.
For example, with $L=2$, we would have to use $n_1=n_2=n/2$
and this could be a bad fit for $\alpha_1=0.9999$ and $\alpha_2=0.0001$.

We take $\beta_\ell=2^{-\kappa_\ell}$ such
that $\sum_{\ell=1}^L\beta_\ell=1$.
That is, $\beta_\ell$ are a partition of $1$.
Partitions like $(1/2,1/4,1/4)$ and $(1/4,1/2,1/4)$
are equivalent because any assignment of one of those to $\beta_\ell$
can be achieved by the other after a permutation of indices.

We suppose that $\beta_1\ge\beta_2\ge\cdots\ge\beta_L$ for $L\ge2$
and look at sample fractions $(n_1/n,n_2/n,\dots,n_L/n)$.
For $L=2$ the only solution is $(1/2,1/2)$ and
for $L=3$ the only solution is $(1/2,1/4,1/4)$.
For $L=4$, there are two possibilities: $(1/2,1/4,1/8,1/8)$ and $(1/4,1/4,1/4,1/4)$
and for $5\le L\le 10$ the numbers of
possibilities are: 3, 5, 9, 16, 28 and 50.
The tables in Figure~\ref{tab:partitions} show the partitions for $4\le L\le 8$.

\begin{figure}[t!]
  \centering
  \makebox[\textwidth][c]{%
    \begin{minipage}[t]{0.48\textwidth}
      \vspace{0pt}
      \centering
      \begin{tabular}{lccccccccc}
        \toprule
        L & \multicolumn{4}{l}{Partitions:} & & & & & \\
        \midrule
        4 & $\tfrac12$ & $\tfrac14$ & $\tfrac18$  & $\tfrac18$ & & & & & \\[0.8ex]
          & $\tfrac14$ & $\tfrac14$ & $\tfrac14$  & $\tfrac14$ & & & & & \\[0.8ex]
        \midrule
        5 & $\tfrac12$ & $\tfrac14$ & $\tfrac18$  & $\tfrac1{16}$ & $\tfrac1{16}$ & & & & \\[0.8ex]
          & $\tfrac12$ & $\tfrac18$ & $\tfrac18$  & $\tfrac18$     & $\tfrac18$     & & & & \\[0.8ex]
          & $\tfrac14$ & $\tfrac14$ & $\tfrac14$  & $\tfrac18$     & $\tfrac18$     & & & & \\[0.8ex]
        \midrule
        6 & $\tfrac12$ & $\tfrac14$ & $\tfrac18$  & $\tfrac1{16}$ & $\tfrac1{32}$ & $\tfrac1{32}$ & & & \\[0.8ex]
          & $\tfrac12$ & $\tfrac14$ & $\tfrac1{16}$ & $\tfrac1{16}$ & $\tfrac1{16}$ & $\tfrac1{16}$ & & & \\[0.8ex]
          & $\tfrac12$ & $\tfrac18$ & $\tfrac18$  & $\tfrac18$     & $\tfrac1{16}$ & $\tfrac1{16}$ & & & \\[0.8ex]
          & $\tfrac14$ & $\tfrac14$ & $\tfrac14$  & $\tfrac18$     & $\tfrac1{16}$ & $\tfrac1{16}$ & & & \\[0.8ex]
          & $\tfrac14$ & $\tfrac14$ & $\tfrac18$  & $\tfrac18$     & $\tfrac18$     & $\tfrac18$     & & & \\[0.8ex]
        \midrule
        7 & $\tfrac12$  & $\tfrac14$  & $\tfrac18$  & $\tfrac1{16}$ & $\tfrac1{32}$ & $\tfrac1{64}$ & $\tfrac1{64}$ & & \\[0.8ex]
          & $\tfrac12$  & $\tfrac14$  & $\tfrac18$  & $\tfrac1{32}$ & $\tfrac1{32}$ & $\tfrac1{32}$ & $\tfrac1{32}$ & & \\[0.8ex]
          & $\tfrac12$  & $\tfrac14$  & $\tfrac1{16}$ & $\tfrac1{16}$ & $\tfrac1{16}$ & $\tfrac1{32}$ & $\tfrac1{32}$ & & \\[0.8ex]
          & $\tfrac12$  & $\tfrac18$  & $\tfrac18$  & $\tfrac18$     & $\tfrac1{16}$ & $\tfrac1{32}$ & $\tfrac1{32}$ & & \\[0.8ex]
          & $\tfrac12$  & $\tfrac18$  & $\tfrac18$  & $\tfrac1{16}$ & $\tfrac1{16}$ & $\tfrac1{16}$ & $\tfrac1{16}$ & & \\[0.8ex]
          & $\tfrac14$  & $\tfrac14$  & $\tfrac14$  & $\tfrac18$     & $\tfrac1{16}$ & $\tfrac1{32}$ & $\tfrac1{32}$ & & \\[0.8ex]
          & $\tfrac14$  & $\tfrac14$  & $\tfrac14$  & $\tfrac1{16}$ & $\tfrac1{16}$ & $\tfrac1{16}$ & $\tfrac1{16}$ & & \\[0.8ex]
          & $\tfrac14$  & $\tfrac14$  & $\tfrac18$  & $\tfrac18$     & $\tfrac18$     & $\tfrac1{16}$ & $\tfrac1{16}$ & & \\[0.8ex]
          & $\tfrac14$  & $\tfrac18$  & $\tfrac18$  & $\tfrac18$     & $\tfrac18$     & $\tfrac18$     & $\tfrac18$     & & \\[0.8ex]
        \bottomrule
      \end{tabular}
    \end{minipage}%
    \hspace{0.08\textwidth}%
    \begin{minipage}[t]{0.52\textwidth}
      \vspace{0pt}
      \centering
      \begin{tabular}{l*{8}{c}}
        \toprule
        L & \multicolumn{8}{c}{Partitions:} \\
        \midrule
        8  & $\tfrac12$ & $\tfrac14$ & $\tfrac18$   & $\tfrac1{16}$ & $\tfrac1{32}$ & $\tfrac1{64}$  & $\tfrac1{128}$ & $\tfrac1{128}$ \\[0.8ex]
           & $\tfrac12$ & $\tfrac14$ & $\tfrac18$   & $\tfrac1{16}$ & $\tfrac1{64}$ & $\tfrac1{64}$  & $\tfrac1{64}$   & $\tfrac1{64}$   \\[0.8ex]
           & $\tfrac12$ & $\tfrac14$ & $\tfrac18$   & $\tfrac1{32}$ & $\tfrac1{32}$ & $\tfrac1{32}$  & $\tfrac1{64}$   & $\tfrac1{64}$   \\[0.8ex]
           & $\tfrac12$ & $\tfrac14$ & $\tfrac1{16}$ & $\tfrac1{16}$ & $\tfrac1{16}$ & $\tfrac1{32}$  & $\tfrac1{64}$   & $\tfrac1{64}$   \\[0.8ex]
           & $\tfrac12$ & $\tfrac14$ & $\tfrac1{16}$ & $\tfrac1{16}$ & $\tfrac1{32}$ & $\tfrac1{32}$  & $\tfrac1{32}$   & $\tfrac1{32}$   \\[0.8ex]
           & $\tfrac12$ & $\tfrac18$ & $\tfrac18$   & $\tfrac18$    & $\tfrac1{16}$ & $\tfrac1{32}$  & $\tfrac1{64}$   & $\tfrac1{64}$   \\[0.8ex]
           & $\tfrac12$ & $\tfrac18$ & $\tfrac18$   & $\tfrac18$    & $\tfrac1{32}$ & $\tfrac1{32}$  & $\tfrac1{32}$   & $\tfrac1{32}$   \\[0.8ex]
           & $\tfrac12$ & $\tfrac18$ & $\tfrac18$   & $\tfrac1{16}$ & $\tfrac1{16}$ & $\tfrac1{16}$  & $\tfrac1{32}$   & $\tfrac1{32}$   \\[0.8ex]
           & $\tfrac12$ & $\tfrac18$ & $\tfrac1{16}$ & $\tfrac1{16}$ & $\tfrac1{16}$ & $\tfrac1{16}$  & $\tfrac1{16}$   & $\tfrac1{16}$   \\[0.8ex]
           & $\tfrac14$ & $\tfrac14$ & $\tfrac14$   & $\tfrac18$    & $\tfrac1{16}$ & $\tfrac1{32}$  & $\tfrac1{64}$   & $\tfrac1{64}$   \\[0.8ex]
           & $\tfrac14$ & $\tfrac14$ & $\tfrac14$   & $\tfrac18$    & $\tfrac1{32}$ & $\tfrac1{32}$  & $\tfrac1{32}$   & $\tfrac1{32}$   \\[0.8ex]
           & $\tfrac14$ & $\tfrac14$ & $\tfrac14$   & $\tfrac1{16}$ & $\tfrac1{16}$ & $\tfrac1{16}$  & $\tfrac1{32}$   & $\tfrac1{32}$   \\[0.8ex]
           & $\tfrac14$ & $\tfrac14$ & $\tfrac18$   & $\tfrac18$    & $\tfrac18$    & $\tfrac1{16}$  & $\tfrac1{32}$   & $\tfrac1{32}$   \\[0.8ex]
           & $\tfrac14$ & $\tfrac14$ & $\tfrac18$   & $\tfrac18$    & $\tfrac1{16}$ & $\tfrac1{16}$  & $\tfrac1{16}$   & $\tfrac1{16}$   \\[0.8ex]
           & $\tfrac14$ & $\tfrac18$ & $\tfrac18$   & $\tfrac18$    & $\tfrac18$ & $\tfrac18$  & $\tfrac1{16}$   & $\tfrac1{16}$   \\[0.8ex]
           & $\tfrac18$ & $\tfrac18$ & $\tfrac18$   & $\tfrac18$    & $\tfrac18$    & $\tfrac18$     & $\tfrac18$      & $\tfrac18$      \\[0.8ex]
        \bottomrule
      \end{tabular}
    \end{minipage}%
  }
  \caption{\label{tab:partitions}
    These are the partitions of unity into $L$ negative powers of $2$
    for $L\in\{4,5,6,7,8\}$.
  }
\end{figure}

The number of such partitions as a function of $L$ is given by sequence A002572 at the online encyclopedia of integer sequences, \citep{oeis}
along with motivating
applications such as counting the number of rooted
binary trees or, more recently, counting nonequivalent compact Huffman codes 
(\cite{elsh:heub:kren:2024}).
The latter reference mentions that the number of solutions
is asymptotic to $R\theta^L$ for $R\doteq 0.14$ and $\theta\doteq1.794$.
This is a somewhat faster growth rate than the Fibonacci numbers have:
their counterpart to $\theta$ is $(1+\sqrt{5})/2\doteq 1.618$.
For $L=10$, the approximation is close to 48.3. 

To choose $\beta_\ell$, or equivalently $n_\ell$,
we use a forward stepwise algorithm.
We initialize every $m_\ell=0$ so every
$n_\ell = 2^{m_\ell}$ is initially $1$.
At every step of the algorithm there are
$n' = n-\sum_{\ell=1}^Ln_\ell$ samples left to allocate.
If $n'=0$, the algorithm terminates.  If $n'>0$ then we 
select one of the indices $\ell$ with $n_\ell \le n'$ and 
double that $n_\ell$, replacing $m_\ell$ by $m_\ell+1$.  Before
describing which $n_\ell$ to double, we show that
any such algorithm always terminates.

\begin{proposition}\label{prop:alwaysterminates}
The above algorithm always terminates in a finite
number of steps.
\end{proposition}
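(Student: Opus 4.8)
The plan is to argue that the algorithm terminates by exhibiting a nonnegative integer quantity that strictly decreases at every step, so that it can be repeated only finitely often. The obvious candidate is the number of samples remaining to allocate, $n' = n - \sum_{\ell=1}^L n_\ell$. First I would check that $n'$ is always a nonnegative integer: it starts at $n - L \ge 0$ (using the standing assumption $n \ge L$), and each step doubles some $n_\ell$ that satisfies $n_\ell \le n'$, replacing $\sum_\ell n_\ell$ by $\sum_\ell n_\ell + n_\ell$. Since the chosen index has $n_\ell \le n'$, the new value of $n'$ is $n' - n_\ell \ge 0$, so $n'$ stays a nonnegative integer throughout.

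The key observation is that $n'$ \emph{strictly} decreases at each step where the algorithm does not terminate. Whenever $n' > 0$, the algorithm selects an index with $n_\ell \le n'$ and doubles it, subtracting $n_\ell$ from $n'$. Because every $n_\ell = 2^{m_\ell} \ge 1$, we subtract at least $1$, so $n'$ drops by at least $1$ at every non-terminating step. A strictly decreasing sequence of nonnegative integers has finite length, so after at most $n - L$ steps we must reach $n' = 0$, at which point the algorithm halts.

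The one point that needs care — and which I would treat as the main thing to verify rather than a genuine obstacle — is that the selection step is always \emph{possible} when $n' > 0$: the algorithm is only well defined if there exists at least one index $\ell$ with $n_\ell \le n'$. This holds because $\min_\ell n_\ell = 2^{\min_\ell m_\ell} \ge 1$, and whenever $n' \ge 1$ the smallest $n_\ell$ (which equals some power of two, at worst larger than $1$) — here I would simply note that at least one $n_\ell$ equals the current minimum, and if that minimum exceeded $n'$ then all $n_\ell > n'$, forcing $\sum_\ell n_\ell > n' + \sum_\ell n_\ell - \max$, a contradiction with $\sum_\ell n_\ell = n - n'$; more directly, the final remaining gap $n'$ is itself a sum of powers of two each at most $n'$, so some allocatable index always exists. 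Since $n \ge L$ guarantees a feasible start and every step reduces the bounded-below integer $n'$, the algorithm terminates in at most $n - L$ steps.
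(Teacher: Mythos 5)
You correctly identify the structure of the argument (the nonnegative integer $n' = n - \sum_{\ell=1}^L n_\ell$ strictly decreases at every non-terminating step, so the algorithm halts in at most $n-L$ steps), and you also correctly identify the one point that genuinely needs proof: that whenever $n' > 0$ there exists an eligible index $\ell$ with $n_\ell \le n'$, so that the selection step is well defined. Unfortunately, that is exactly the point where your argument breaks down. Your first attempt derives the inequality $\sum_\ell n_\ell > n' + \sum_\ell n_\ell - \max$, which is not a contradiction with anything (and does not follow cleanly from the assumption $\min_\ell n_\ell > n'$ in any case). Your second, ``more direct'' attempt --- that $n'$ is a sum of powers of two each at most $n'$ --- is true of the binary expansion of any positive integer, but it says nothing about the multiset of \emph{current} values $n_1,\dots,n_L$; there is no reason a priori why one of those particular powers of two should be $\le n'$.

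The missing ingredient is a divisibility argument, which is how the paper closes this gap. Let $\ell^* = \argmin_\ell n_\ell$, so $n_{\ell^*} = 2^{k}$ for some $k\ge 0$. Then $2^k$ divides every $n_\ell$ (each is a power of two at least as large) and divides $n = 2^m$, hence $2^k$ divides $n' = n - \sum_\ell n_\ell$. Since $n' > 0$ and $2^k \mid n'$, we get $n_{\ell^*} = 2^k \le n'$, so the minimal stratum is always eligible. With that lemma in place, your monovariant argument goes through exactly as you wrote it.
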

\begin{proof}
    See Appendix \ref{sec:proof:prop:alwaysterminates}.
\end{proof}

Under criterion 0, the ideal proportion of samples in
stratum $\ell$ is
$$\xi_{\ell,\rho}=\alpha_\ell^{2/(\rho+1)}\Bigm/\sum_{k=1}^L\alpha_k^{2/(\rho+1)}.$$
We choose to double the sample size for stratum
$$\ell^*_0 = \argmax_{1\le \ell\le L}\frac{\xi_{\ell,\rho}}{2^{m_{\ell}}}\ind\Bigl\{2^{m_{\ell}}\le n-\sum_{j=1}^L2^{m_j}\Bigr\}$$ 
which we consider to be the most `under-represented'
of the eligible strata.
The analogous algorithm under criterion $1$ is to double the sample size of stratum
$$\ell^*_1 = \argmax_{1\le \ell\le L}\frac{\xi_{\ell,\rho}}{2^{m_{\ell}}}\ind\Bigl\{2^{m_{\ell}}\le n-\sum_{j=1}^L2^{m_j}\Bigr\}$$ 
where
$$\xi_{\ell,\rho}=\alpha_\ell^{2/(\rho+2)}\Bigm/\sum_{k=1}^L\alpha_k^{2/(\rho+2)}.$$
 The pseudo-code for forward allocation algorithm for criterion 0 is given in Algorithm~\ref{alg:fsa}.

\begin{algorithm}[t]
\caption{Forward Stratified Allocation\label{alg:fsa}}
\begin{algorithmic}[1]
\INPUT Budget $n = 2^m\ge L$, stratum weights $\alpha_1, \dots, \alpha_L$, parameter $\rho\ge1$
\OUTPUT Powers $\bm{m} = (m_1, \dots, m_L)$ such that $n_\ell = 2^{m_\ell}$ and $\sum_{\ell=1}^L n_\ell = n$

\STATE Initialize $m_\ell \leftarrow 0$ for all $\ell = 1, \dots, L$
\STATE Compute $\xi_{\ell,\rho} \leftarrow \alpha_\ell^{2/(\rho+1)} \big/ \sum_{k=1}^L \alpha_k^{2/(\rho+1)}$ for all $\ell$
\WHILE{$n - \sum_{\ell=1}^L 2^{m_\ell} > 0$}
    \STATE Find $\ell^* \leftarrow \argmax_{1 \le \ell \le L} \left\{ \dfrac{\xi_{\ell,\rho}}{2^{m_\ell}} \cdot \ind\left\{ 2^{m_\ell} \le n - \sum_{j=1}^L 2^{m_j} \right\} \right\}$
    \STATE Update $m_{\ell^*} \leftarrow m_{\ell^*} + 1$
\ENDWHILE
\RETURN $\bm{m} = (m_1, \dots, m_L)$
\end{algorithmic}
\end{algorithm}

\section{Minimax choice}\label{sec:minimax}
Here we show a minimax property of near-equal allocation.
We do not assume equality among the $\tau_\ell$.
For any integer $N\ge L$, let
\begin{align*}
\Delta_N &=\biggl\{\bsn=(n_1,\dots,n_L)\in \natu^{L}\Bigm| \sum_{\ell=1}^L n_{\ell} = N\biggr\},\quad\text{and}\quad\\
\mathcal{T}&=\Bigl\{\bstau=(\tau_1,\dots,\tau_L)\in [0,\infty)^L\bigm| \max_{1\leq \ell \leq L}\tau_{\ell}>0\Bigr\}
\end{align*}
be spaces of feasible sample sizes and possible variance
constants, respectively. We allow some $\tau_\ell=0$ but
there must be at least one $\tau_\ell>0$ to make the problem
interesting.

We consider the increase in variance that we get
using sample sizes $\bsn$ compared to some
better sample sizes $\tilde\bsn$ that we might
otherwise have used.
For criteria $0$ and $1$ we define suboptimality ratios
\[R_0(\bsn\giv\tilde{\bsn};\bstau) := \frac{\sum_{\ell} \alpha_\ell^2 \tau_{\ell} n_{\ell}^{-\rho}}{\sum_{\ell} \alpha_\ell^2 \tau_{\ell} \tilde{n}_{\ell}^{-\rho}}
 \quad \text{and} \quad R_1(\bsn\giv\tilde{\bsn};\bstau) := \frac{\sum_{\ell} \alpha_\ell \tau_{\ell}^{1/2} n_{\ell}^{-\rho/2}}{\sum_{\ell} \alpha_\ell \tau_{\ell}^{1/2} \tilde{n}_{\ell}^{-\rho/2}}
\]
that compare different sample sizes taking account
of the stratum variances $\tau_\ell$.
 These ratios depend on $\rho$ and $\alpha_1,\dots,\alpha_L$, but we leave those out of the notation
because they are held constant in this section.

\begin{proposition}\label{prop:minmax}
 For both criteria $a \in \{0,1\}$,
    \begin{equation}\label{eq:mainpb}
    \min_{\bsn\in \Delta_N}\max_{\tilde\bsn\in \Delta_N}\max_{\bstau\in \mathcal{T}} R_a(\bsn\giv\tilde\bsn;\bstau)=R_a(\bm{n}^*\giv\tilde{\bsn}^*;\bstau^*)
    \end{equation}
for minimax sample sizes
    $$n_{\ell}^*=\begin{cases}
        \left\lfloor{\frac{N}{L}}\right\rfloor +1, & 
        \textrm{for }\ell=1,\dots,r\\
        \left\lfloor{\frac{N}{L}}\right\rfloor, &\textrm{for }\ell=r+1,\dots,L
    \end{cases}$$
    with $r=N-L\lfloor N/L \rfloor$,
worst case within-stratum 
variances $\tau_\ell=\lambda \ind\{\ell = \ell_*$\}
for any $\lambda>0$ and worst case alternative
sample sizes
\begin{equation*}
\begin{array}{ll}
\tilde{n}_{\ell}^* =
\begin{cases}
    N-L+1, & \text{if } \ell = \ell^* \\
    1, & \text{if } \ell \in\{1,2,\dots,L\}\setminus\{\ell^*\}
\end{cases}
\end{array}
\end{equation*}
for any $\ell^*\in\{r+1,\dots,L\}$.
\end{proposition}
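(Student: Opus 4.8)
The plan is to evaluate the three-fold optimization from the inside out, peeling off $\bstau$ first, then $\tilde\bsn$, then $\bsn$. Throughout set $q=\rho$ for $a=0$ and $q=\rho/2$ for $a=1$, so that in both cases the ratio takes the common form $R_a(\bsn\giv\tilde\bsn;\bstau)=\bigl(\sum_\ell w_\ell n_\ell^{-q}\bigr)/\bigl(\sum_\ell w_\ell \tilde n_\ell^{-q}\bigr)$ with nonnegative weights $w_\ell=\alpha_\ell^2\tau_\ell$ (for $a=0$) or $w_\ell=\alpha_\ell\tau_\ell^{1/2}$ (for $a=1$). The first step is to observe that the two maxima over $\tilde\bsn$ and $\bstau$ form a single joint maximum over $\Delta_N\times\mathcal{T}$, so I may take them in either order and start with $\bstau$. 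Because every $\alpha_\ell>0$, as $\bstau$ ranges over $\mathcal{T}$ the weight vector $(w_1,\dots,w_L)$ ranges over all of $[0,\infty)^L\setminus\{\bszero\}$ in both ansatzes. The maximum over such weights of a ratio of two positive linear forms equals the largest coefficient ratio: writing $n_\ell^{-q}=(\tilde n_\ell/n_\ell)^q\,\tilde n_\ell^{-q}$ gives $\sum_\ell w_\ell n_\ell^{-q}\le \bigl(\max_k(\tilde n_k/n_k)^q\bigr)\sum_\ell w_\ell \tilde n_\ell^{-q}$, with equality when all weight sits on the maximizing index. Hence $\max_{\bstau\in\mathcal{T}}R_a=\max_\ell(\tilde n_\ell/n_\ell)^q$, attained by $\bstau=\lambda\ind\{\ell=\ell_*\}$ for $\ell_*\in\argmax_\ell \tilde n_\ell/n_\ell$ and any $\lambda>0$. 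Notably the $\alpha_\ell$ cancel entirely.

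Next I would eliminate $\tilde\bsn$. Since $t\mapsto t^q$ is increasing for $q>0$, maximizing over $\tilde\bsn\in\Delta_N$ reduces to maximizing $\max_\ell \tilde n_\ell/n_\ell$. Interchanging the two maxima and noting that for each fixed $\ell$ the largest feasible value of $\tilde n_\ell$ is $N-L+1$ (put $N-L+1$ on coordinate $\ell$ and $1$ on the remaining $L-1$ coordinates), I obtain $\max_{\tilde\bsn\in\Delta_N}\max_{\bstau\in\mathcal{T}}R_a=\bigl((N-L+1)/\min_\ell n_\ell\bigr)^q$.

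Finally I would optimize over $\bsn$ and verify the witnesses. Since $N-L+1$ and $q$ are fixed, the outer minimization over $\bsn\in\Delta_N$ is exactly the maximization of $\min_\ell n_\ell$ over positive integers summing to $N$. A one-line counting argument settles this: if every $n_\ell\ge\lfloor N/L\rfloor+1$ then $\sum_\ell n_\ell\ge L\lfloor N/L\rfloor+L>N$, a contradiction, so $\min_\ell n_\ell\le\lfloor N/L\rfloor$, and the balanced allocation $\bsn^*$ (with $r=N-L\lfloor N/L\rfloor$ coordinates equal to $\lfloor N/L\rfloor+1$) attains it. For the witness I would take $\ell^*=\ell_*\in\{r+1,\dots,L\}$, a stratum where $n^*_{\ell^*}=\lfloor N/L\rfloor=\min_\ell n^*_\ell$; concentrating $\bstau^*$ there and setting $\tilde n^*_{\ell^*}=N-L+1$, the ratio collapses to $(\tilde n^*_{\ell^*}/n^*_{\ell^*})^q=\bigl((N-L+1)/\lfloor N/L\rfloor\bigr)^q$, which is precisely the minimax value just computed.

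The only step demanding real care is the first reduction, and that is where I expect the main (though modest) obstacle to lie. The set $\mathcal{T}$ is neither compact nor contains the origin, so to justify that the supremum over $\bstau$ is attained and equals the vertex value I would use the scale-invariance of $R_a$ in $\bstau$ to restrict to a compact slice, say $\sum_\ell w_\ell=1$ inside the closed orthant, on which the ratio is continuous; the denominator stays strictly positive because every $\tilde n_\ell\ge1$ while some $w_\ell>0$. Everything downstream is elementary bookkeeping. I would also note that the optimizers are non-unique---any permutation of $\bsn^*$, any $\ell^*\in\{r+1,\dots,L\}$, and any $\lambda>0$---matching the ``for any'' phrasing in the statement, and that the coordination $\ell_*=\ell^*$ is exactly what makes the exhibited triple attain the bound.
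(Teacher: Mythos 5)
Your proposal is correct and follows essentially the same route as the paper: maximize over $\bstau$ by concentrating all weight on the index maximizing $\tilde n_\ell/n_\ell$, then over $\tilde\bsn$ by placing $N-L+1$ on the stratum with the smallest $n_\ell$, and finally minimize over $\bsn$ by maximizing $\min_\ell n_\ell$, which yields the balanced allocation. Your unification of the two ansatzes via the exponent $q$ and your attention to attainment of the supremum over the non-compact set $\mathcal{T}$ are slightly more explicit than the paper's treatment, but the argument is the same.
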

\begin{proof}
See Appendix \ref{sec:proof:prop:minmax}
\end{proof}


When we must have $n_\ell=2^{m_\ell}$ for integers
$m_\ell\ge0$, there is a corresponding minimax
result that takes the $n_\ell$ as nearly equal as
possible.  
Let $r = \lceil \log_2(L)\rceil$
and then every $\beta_\ell$ is either $2^{-r}$ or $2^{1-r}$.
Solving
$s2^{1-r}+(L-s)2^{-r}=1$, we see that there should be $s=2^r-L$ strata with the larger
allocation.
Then we take $n_\ell=n\beta_\ell$ for
$$\beta_\ell =
\begin{cases}
 2^{1-r}, &1\le\ell\le 2^r-L\\
2^{-r}, &2^r-L<\ell\le L.
\end{cases}
$$

\section{A toy example}\label{sec:toy}

Here we consider a very small example with
only one continuous variable and $8$ strata
for illustration.
In stratum $\ell$, the parameter $\theta$
takes the value $\theta_\ell$
and then $x_1\sim\dnorm(\theta_\ell,1)$. This example
uses
\begin{equation}\begin{split}
\label{eq:toyparams}
\alpha &=(0.50,0.44,0.01,0.01,0.01,0.01,0.01,0.01),\quad\text{and}\\
\theta&=(0.7,1.0,1.5,1.6,1.7,1.8,1.9,2.0).
\end{split}
\end{equation}
The within-stratum integrand is
$g(x_1)=\exp(-x_1^2)\cos(x_1)$.
The integrand $f(\bsz)$ for $\bsz\in[0,1]^2$ has a first
component $z_1$ that selects a stratum $\ell$ as described in
Section~\ref{sec:sampleastratum}. Then $x_1=\theta_\ell+\Phi^{-1}(z_2)$
where $\Phi(\cdot)$ is the cumulative distribution function
of the $\dnorm(0,1)$ distribution.  We chose this function because
it is regular enough that we expect to see an RQMC variance of
$O(n^{-3})$. We chose these values of $\alpha_\ell$ so that the forward algorithm would give different $n_\ell$ for $\rho=2$ than it does for $\rho=3$.

We compare the variance of these unbiased estimators:
\begin{enumerate}[1)]
    \item A plain Monte-Carlo estimator: 
$\hat{\mu}_{\mc}=(1/n)\sum_{i=1}^n f(\bsz_i)$
where $\bsz_i\simiid \dustd(0,1)^2$.
    \item A plain RQMC estimator:
$ \hat\mu_{\rqmc}=(1/n)\sum_{i=1}^nf(\bsz_i)$
where $\bsz_1,\dots,\bsz_n$ are the first $n$ points of a scrambled Sobol' sequence in $[0,1]^2$.
   \item An importance adjusted estimator:
$\hat\mu_{\rqmc(\adj)} =(1/n)\sum_{i=1}^n f(\bsz_i)\omega(\ell(i))$
with RQMC points where the $n_\ell$ come from rounding equation~\eqref{eq:criterion0}, using $\rho=2$, because that is the rate
from the Koksma-Hlawka bound for $n_\ell$ not equal to powers of two. 
\item A reweighted RQMC estimator:
$\hat\mu_{\rqmc(2)} =(1/n)\sum_{i=1}^n f(\bsz_i)\omega(\ell(i))$,
where the $n_\ell$ are powers of $2$ chosen by Algorithm~\ref{alg:fsa} with $\rho=3$,
the variance rate for such $n_\ell$.
\item A per stratum estimator:
$\hat\mu_{\rqmc(L)}=\sum_{\ell=1}^L\alpha_\ell\hat\mu_\ell$
where $\hat\mu_\ell$ are independently sampled RQMC estimates,
with the same $n_\ell$ as in $\hat\mu_{\rqmc(2)}$.
\end{enumerate}
For each estimator we compute $R=500$ replicates at sample
sizes $n=2^m$ for $3\le m\le 12$ and then graph the
sample variances. We used nested uniform scrambling \citep{rtms} to do the scrambling.

The results are shown in Figure~\ref{fig:toyvariances}.
The smaller sample sizes are comparable to the number of
strata and so when we look for an understanding of convergence
we emphasize the larger values of $n$. All of the methods are
unbiased even though some of them have some empty strata
at small $n$.

The variance for Monte Carlo sampling follows
the expected $O(1/n)$ rate.
Simply replacing the MC points by RQMC points
improves the rate to roughly $O(1/n^2)$.
Adjusting the sampling fractions from $\alpha_\ell$ 
to $\beta_\ell$ using $\rho=2$ 
retains the RQMC rate and provides lower variance
at larger samples sizes. 
Using $\rho=3$ to choose sample sizes $n_\ell$, we compared
two methods.  The first generates the estimate $\hat\mu_{\rqmc(L)}$ using $L$ independent RQMC samples, with $n_\ell$ points in stratum $\ell$.
The other uses the combined RQMC estimate $\hat\mu_{\rqmc(2)}$.
These latter two both give an apparent $O(n^{-3})$
rate for the variance.  The combined estimate has a
better constant in this example. See our discussion
of between strata correlations in Section~\ref{sec:discussion}.

\begin{figure}[t]
    \centering
    \includegraphics[width=0.9\linewidth]{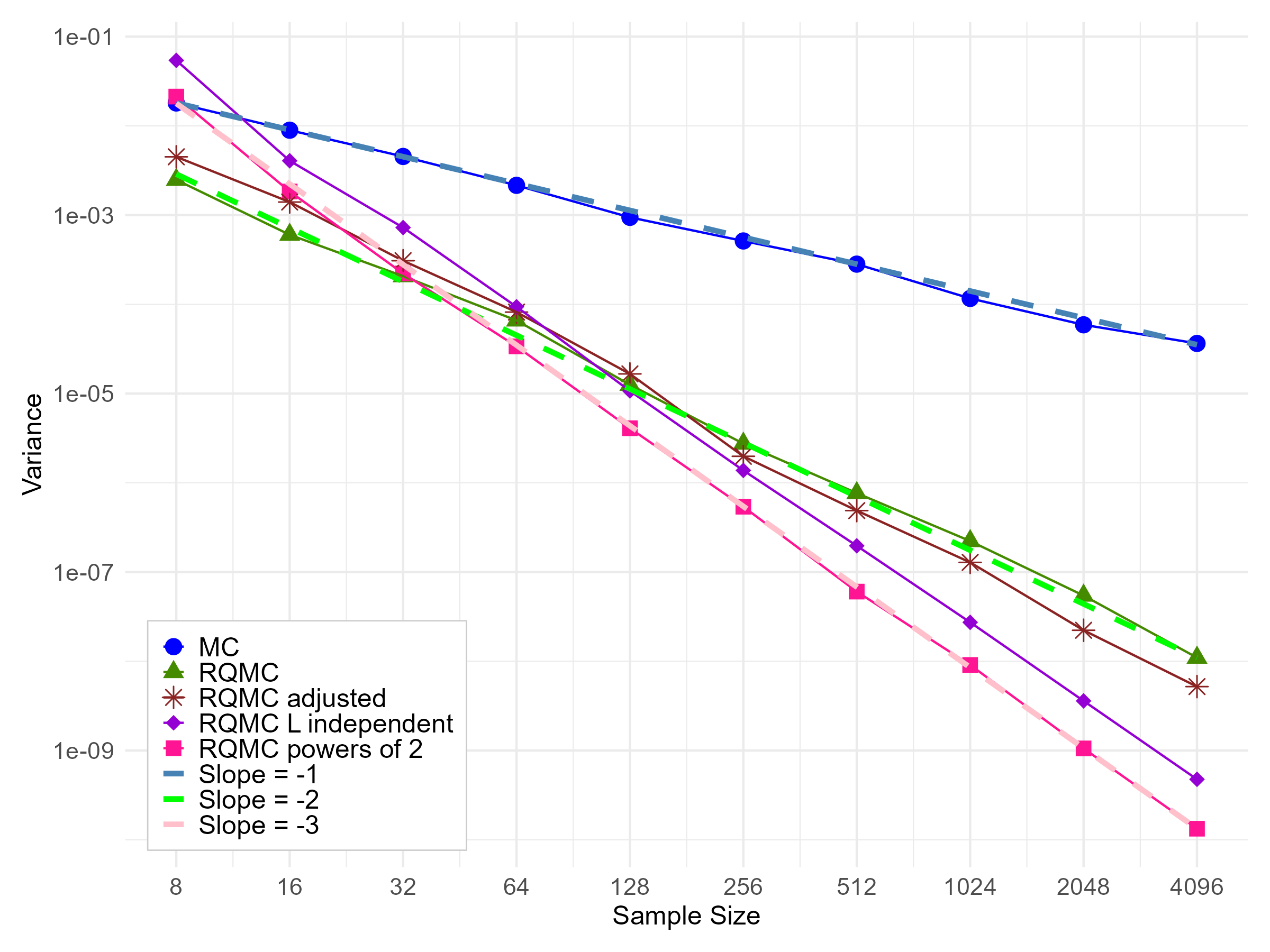}
    \caption{Variance of the 5 estimators for the estimation of $\e(f(\bsz))$ versus the sample size $n$, for the toy integrand. \changed{The top to bottom ordering of the curves in the legend is the same as they have at $n=4096$.} Four of the curves are nearly parallel to reference lines of a given slope.}
    \label{fig:toyvariances}
\end{figure}

We also compared designs for different values of $\rho$
when taking $n_\ell$ to be powers of $2$.
All of the MSEs were at the $n^{-3}$ rate.
The best constant was from using $\rho=3$ and
the worst was from $\rho=\infty$ which makes the
$n_\ell$ equal.  The variances for $\rho=1$ and
$\rho=2$ were the same because those used
the same values of $n_\ell=2^{m_\ell}$.

We noticed an unplanned anomaly in this example because we have $L=8$. 
Then using $\rho=\infty$,
we get $n_\ell =n/8$, which is always a power of $2$ whether we use Algorithm~\ref{alg:fsa}
or just equation~\eqref{eq:criterion0}. As a result, using $\rho=\infty$ always gave $O(n^{-3})$ convergence
where the best we could have had for $L=7$ or $9$ without
forcing $n_\ell$ to be powers of two, would have been $O(n^{-2})$.

\section{Saint-Venant flood model}\label{sec:flood}

We are motivated by engineering problems with substantial
uncertainty.  Great uncertainty often arises from weather
variables or customer behavior.  Our example is the Saint-Venant
flood model from \cite{limb:dero:2010}.  A more precise flood model would be based
on Navier-Stokes equations that have to be solved numerically.  
The Saint-Venant model is a simplification
that we think is well suited for illustrative purposes.

The quantity of interest is the depth $H$ in meters of river water at a
downstream location. The model has
\begin{align}\label{eq:flooddepth}
H = \Bigl( \frac{Q}{K_sB\sqrt{\alpha}}\Bigr)^{3/5},
\end{align}
where $Q$ is the flow rate (meters$^3$/second), $B$ is the width of the river (meters), $K_s$ is the Strickler
coefficient (meters$^{1/3}$/second) quantifying the roughness of
the river bed and $\alpha$ is the (dimensionless) slope of the river bed.
It is given by $\alpha = (Z_m-Z_v)/D$ where $Z_m$ and $Z_v$ are the
upstream and downstream heights of the river in meters, and $D$ denotes the river's length in meters.

\begin{table}[t]
\centering
\begin{tabular}{lll}
\toprule
Variable & Nominal & Adverse\\
\midrule
$Q$ &  Fr\'echet (shape $6$, scale $1300$) &Fr\'echet (shape $6$, scale $3900$)\\
$K_s$ & Gamma (shape $90$, scale $1/3$)& Gamma (shape $15$, scale $1$)\\
$Z_v$ & Uniform ($49$,$51$) & none\\
$Z_m$ & Uniform ($54$,$56$)& none\\
\bottomrule
\end{tabular}
\caption{\label{tab:distns}Here are nominal distributions
for four variables in the Saint-Venant model. Two of  
them also have adverse distributions that bring worse flooding.}
\end{table}

The inputs to $H$ are random and prior
researchers sampled each input independently from suitable
distributions.  Those prior contexts did not have mixtures. 
We have modified those distributions to include mixtures
of nominal and adverse distributions.  Our strata involve 
different subsets of the variables being adversely sampled.

This model is usually given with a symmetric triangular
density for $D$ over the interval $[4990,5010]$.
It is usually seen that uncertainty in $D$ makes
very little difference. See for example \cite{ioos:lema:2015}. Therefore we fix $D=5000$.
Similarly, $B$ is commonly given a symmetric triangular
distribution over $[295,305]$ and we fix it at $B=300$.
We consider four distributions for the inputs $Q$, $K_s$, $Z_v$ and $Z_m$.
With probability $0.95$ all four random variables are independently
distributed with nominal distributions given in Table~\ref{tab:distns}.
With probability $0.02$, the distribution of $Q$ takes the adverse
distribution in Table~\ref{tab:distns} while the other variables
are nominal. With probability $0.02$, the distribution of $K_s$ is
adverse and the others are nominal.  With probability $0.01$ both
$Q$ and $K_s$ are adverse. 

The distribution of $K_s$ is customarily given as
$\dnorm(30,7.5^2)$ and then truncated to a range such
as $(0,\infty)$ in \cite{limb:dero:2010} or $(15,\infty)$ in \cite{ioos:2011}.
We have chosen a nominal Gamma distribution, so we do not need
any truncation.  That Gamma distribution has mean
$30$ but a lower variance than $7.5^2$.
The adverse distribution for the Gamma distribution takes lower
values which bring more flooding as $K_s$ appears in
the denominator of $H$.  The distribution of $Q$ is
customarily given as Gumbel with mode $1013$ and
scale $558$.  This was found by a maximum likelihood
fit, reported in \cite{limb:dero:2010}.  
The Gumbel distribution has an extreme value justification,
but it also gives negative values which are unphysical.
We have replaced it by a Fr\'echet distribution which
is always positive and is also an extreme value 
distribution.  It has
much heavier tails than the Gumbel which one might
need to characterize extreme risks.  Our nominal Fr\'echet
distribution has a similar mean to the Gumbel
distribution above.  The adverse distribution triples
the nominal one.

\begin{figure}[t]
    \centering
    \includegraphics[width=0.9\linewidth]{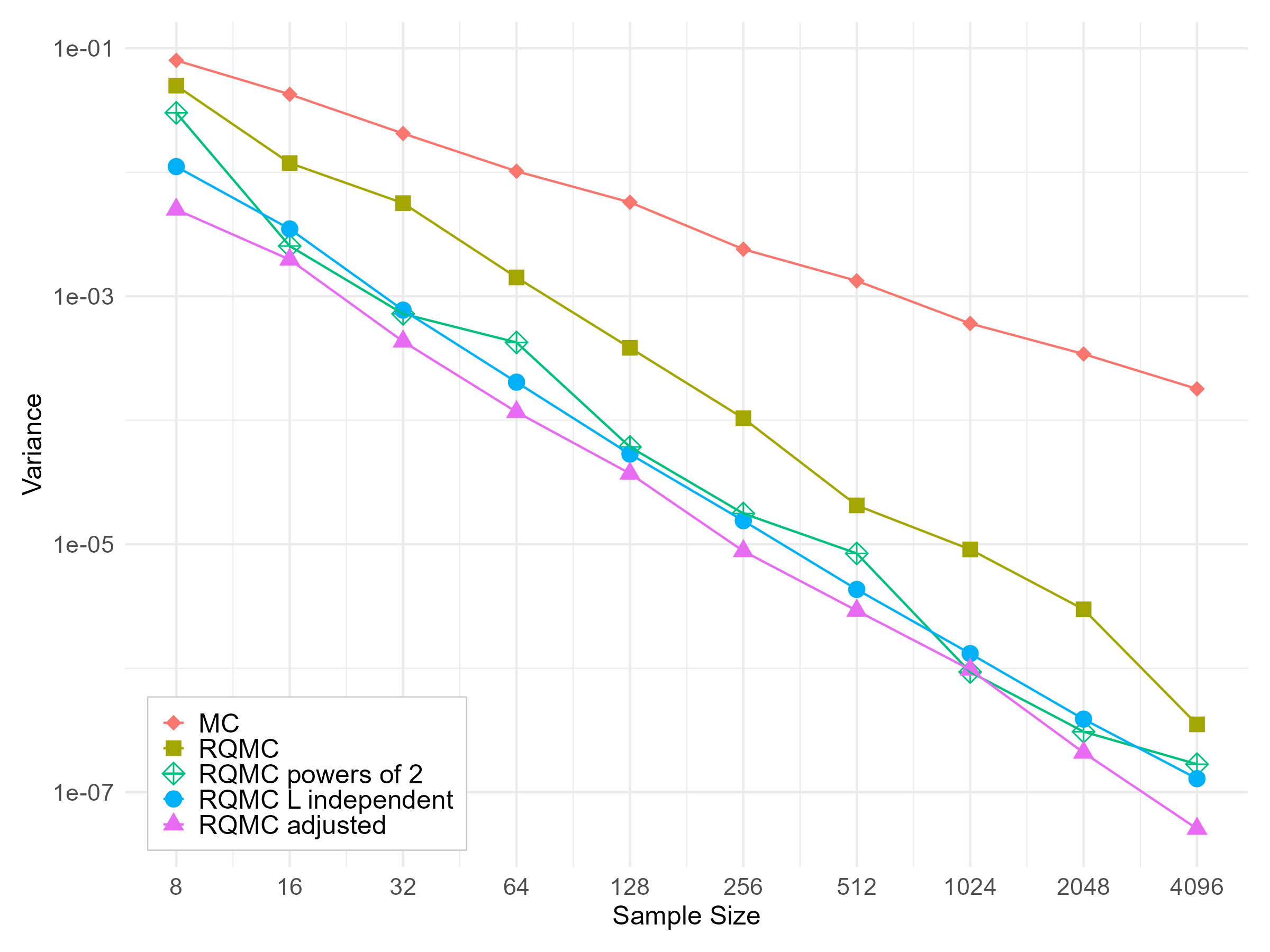}
    \caption{We plot variance versus sample size for $\hat{\mu}_{\mc}$, $\hat{\mu}_{\rqmc}$, $\hat{\mu}_{\rqmc(\adj)}$, $\hat{\mu}_{\rqmc(2)}$ and $\hat{\mu}_{\rqmc(L)}$ for the Saint-Venant flood depth~\eqref{eq:flooddepth}. \changed{The top to bottom ordering of the curves in the legend is the same as they have at $n=4096$.} The sample size allocations for $\hat{\mu}_{\rqmc(\adj)}$, $\hat{\mu}_{\rqmc(2)}$, and $\hat{\mu}_{\rqmc(L)}$ used $\rho=2$.}
    \label{fig:saint-venant-variance}
\end{figure}

The results are shown in Figure~\ref{fig:saint-venant-variance}.
This integrand is unbounded and hence not one
for which the $O(n^{-3+\epsilon})$ results apply,
and so we used $\rho=2$.
We see a large gain from simply replacing MC points
by RQMC points as in $\hat\mu_{\rqmc}$. The RQMC
methods that oversampled the strata with smaller $\alpha_\ell$
had lower variance than the plain RQMC. The best of those
was $\hat\mu_{\rqmc(\adj)}$ which used equation~\eqref{eq:criterion0} 
without requiring $n_\ell$ to be powers of two.  For $\rho=2$,
we don't expect a better rate from taking $n_\ell=2^{m_\ell}$.
Sampling the $L$ strata independently had roughly the
same variance as using the first component of $\bsz_i$
to select a stratum, where we had expected an advantage
for the second method.

In Figure~\ref{fig:enter-label} we consider integer allocations
with different values of $\rho$ using equation~\eqref{eq:criterion0}
to select $n_\ell$.
For large $n$, the lowest variances are for
$\rho=2$ and $3$ with curves that
are nearly parallel with slope close to $-1.8$. The curve for $\rho=2$ 
which is also in Figure~\ref{fig:saint-venant-variance} is best at larger $n$.
The case of $\rho=1$ has proportional allocation and may have
an empty stratum for $n\le 128$ (as $128*0.01<2$).


\begin{figure}[t]
    \centering
    \includegraphics[width=0.9\linewidth]{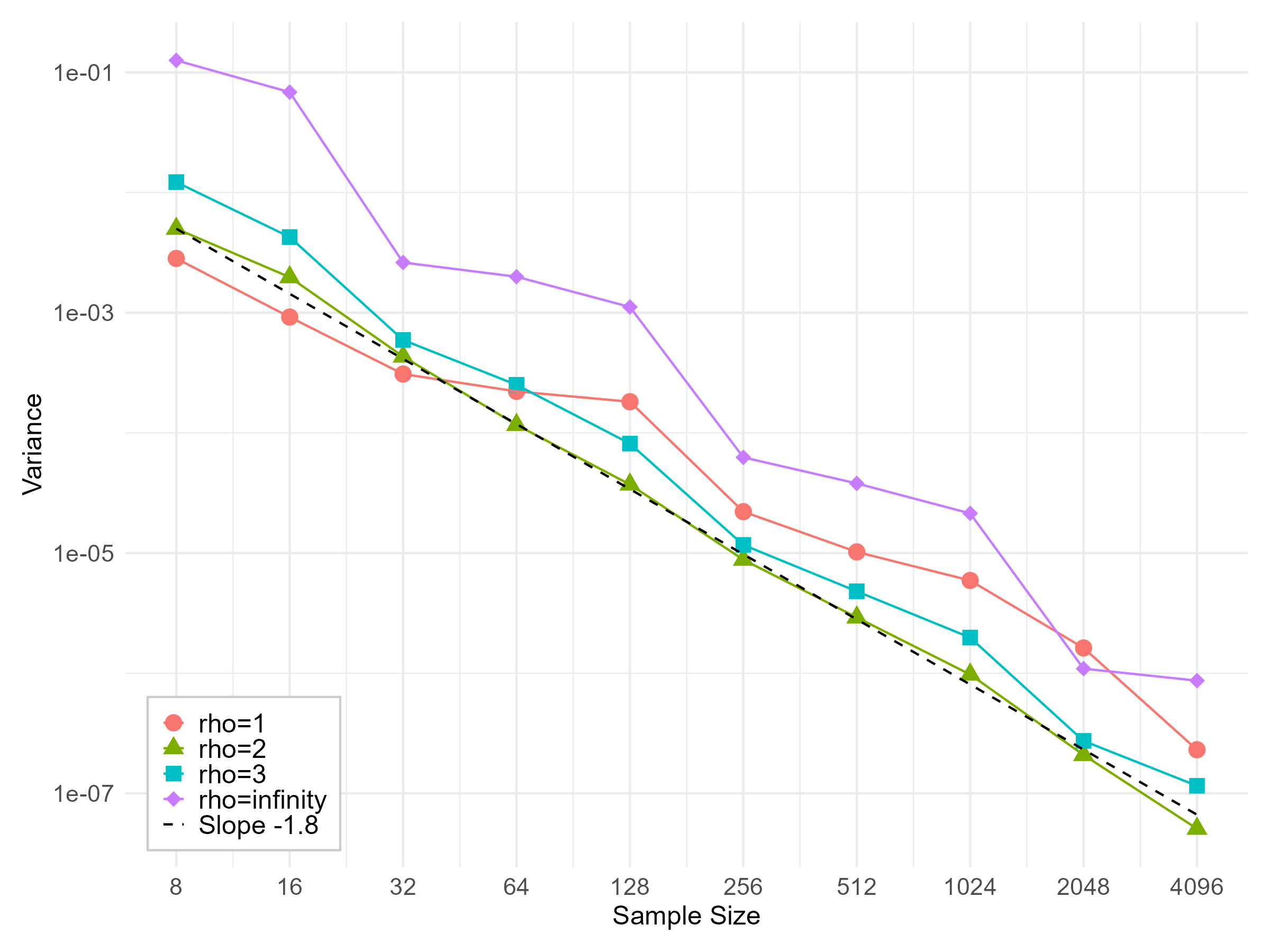}
    \caption{Variance of $\hat{\mu}_{\rqmc(\adj)}$ versus the sample size $n$ for $\rho \in \{1,2,3,+\infty\}$. The reference line has slope $-1.8$.}
    \label{fig:enter-label}
\end{figure}

\section{Discussion}\label{sec:discussion}

We saw RQMC improving the sampling in a
mixture model setting where there is one categorical
variable.  We found that taking account of the better
than Monte Carlo convergence rates motivated changes
where a stratum with mixture probability $\alpha_\ell$
would be better sampled with probability proportional
to $\alpha_\ell^\theta$ for some $\theta<1$.
Choosing the wrong power would adversely affect
the constant in the asymptotic variance but not
the rate.  

A good exponent $\theta$ 
can be derived from an assumed convergence rate, and such a convergence
rate can be estimated adaptively from pilot runs.  We don't
define a formal adaptive algorithm. We are not sure of a
way to do that which would be better than ad hoc experimentation.


We anticipate negative correlations among the $\hat\mu_\ell$
though we can also construct counter-examples.  We expect
that our within-stratum integrands $h_\ell(\bsu)$ will
tend to be similar from stratum to stratum in typical
use cases. As a tiny
example, suppose that $\alpha_1=\alpha_2=1/2$ and
that $h_1(\bsu)=h_2(\bsu)=u_1$. The first $n/2$ values of
$u_1$ (used in $h_1$) combine with the second $n/2$ values
of $u_1$ (used in $h_2$) to give better uniformity than
we get from either half on its own.  We would get
variance $1/(12n^3)$ instead of $2/(12(n/2)^3)=4/(3n^3)$,
which is sixteen times larger.
A similar advantage would be expected when every $h_\ell$ 
is monotone in each $u_j$ either increasing in $u_j$ for all $\ell$
or decreasing in $u_j$ for all $\ell$. For example, increasing
a physical variable might be harmful or helpful
consistently across strata. We can of course
induce positive correlations in the $\hat\mu_\ell$
by taking $h_1(\bsu)=u_1$ and $h_2(\bsu)=-u_1$.
If one expects positive correlations then one might
instead use independently sampled RQMC points
in different strata.  Independent sampling performed
worse than conjoined sampling in the toy problem
but the two were roughly equal in the Saint-Venant
problem. We speculate that the negative correlation advantage,
if it is present, is greatest when several $\beta_\ell$
are large.

\changed{For some problems, the integrand is a simple
 function within each stratum.
 A more interesting expression arises in mixture
 importance sampling.  There we estimate
 $\mu = \int g(\bsx)p(\bsx)\rd\bsx$ for a
 probability density function $p$ by
  \begin{align}\label{eq:mixis}
 \hat\mu = \frac1n \sum_{i=1}^n \frac{g(\bsx_i)p(\bsx_i)}{\sum_{\ell=1}^L\alpha_\ell p_\ell(\bsx_i)}
 \end{align}

 for $\bsx_i\sim\sum_{\ell=1}^L\alpha_\ell p_\ell$.

 We tried a complex importance sampling problem
from \cite{hest:1995} modeling reliability of oil inventory for an energy utility.  
The estimator was of the mixture importance sampling form \eqref{eq:mixis}. An interesting aspect of mixture importance sampling is that it benefits from using $p_{\ell'}(\bsx_i)$ when
$\bsx_i$ was drawn from $p_\ell$ for $\ell\ne\ell'$.
We found only a moderate
improvement with RQMC giving roughly $O(n^{-1.15})$
variance.  The example
was also studied by \cite{li:tan:chen:2013}.
Here we only mention it to show that
RQMC with mixture sampling will not always give an
an extremely large improvement. The relevant
integrands were discontinuous and we speculate
that they may also have had important high order interactions.
  }

We might also have a variable with a mixed distribution
including a continuous component with a density function
and $L$ atoms.  For this setting we can make the continuous
part of that variable's distribution into an $L+1$'st 
mixture component. We will still need to sample within the range
of that continuous component.  For that we can add an $s+1$'st component to $\bsu$, which is then the $s+2$'nd
component of $\bsz$.

We have considered one categorical variable.
Some problems \changed{will} have two or more categorical
variables.  Suppose that there are $K$ categorical variables
where the $k$'th one has $L_k$ levels.
Then we have the choice of whether to use $K$ different
components of RQMC to make the allocations, or to
use one component with $L=\prod_{k=1}^KL_k$ levels
in it.  Optimizing the choice between these two approaches is
outside the scope of this article.

\section*{Acknowledgments}

We thank Neil Sloane for pointing out the connection to sequence A002572.
Thanks to Tim Hesterberg and Wentao Li for help with the
oil inventory problem.  \changed{We thank two anonymous referees and an associate 
editor for comments that helped us improve the paper.}
\bibliography{qmc}

\appendix

\section{Proofs}
This appendix contains the proofs of our propositions.
\subsection{Proof of Proposition~\ref{prop:boundnl}: closeness of $n_*$ to $n\beta$}\label{sec:proof:prop:boundnl}

\begin{proof}    
Write $A=r/n+a$ and $B=s/n+b$ for integers $0\le r\le s<n$ 
and $a,b\in[0,1/n)$. Then $\beta = t/n+c$ for
$t=s-r$ and $-1/n<c<1/n$.
If $r=s$, then $0\le n_*\le1$ and 
then~\eqref{eq:nlbound} holds trivially.

For $s>r$ we may write
$$[A,B)=[A,(r+1)/n)\,\cup\,[(r+1)/n,s/n)\,\cup\,[s/n,s/n+b).$$
Then $[A,B)$ contains at least $s-r-1$
of the intervals $I_k=[(k-1)/n,k/n)$ for $k=1,\dots,n$
and it intersects with at most $s-r+1$ of them.
Therefore $|n_*-(s-r)|\le1$.
Now $\lceil n\beta\rceil=s-r+\lceil nc\rceil$
and so $s-r\ge\lceil n\beta\rceil-1$
establishing the lower limit in~\eqref{eq:nlbound}.
Similarly $\lfloor n\beta\rfloor = s-r+\lfloor nc\rfloor$
so $s-r\le \lfloor n\beta\rfloor+1$
establishing the upper limit in~\eqref{eq:nlbound}.
\end{proof}

\subsection{Proof of Proposition~\ref{prop:stratanets}: digital nets within strata}\label{sec:proof:prop:stratanets}
\begin{proof}
For $S_\ell$ defined at~\eqref{eq:defSell}, sample $i\in S_\ell$ holds if and only if
$B_{\ell-1} \le z_{i1} < B_\ell$ or equivalently
$
\sum_{1\le r<\ell}2^{-\kappa_r}\le z_{i1} < \sum_{1\le r\le\ell}2^{-\kappa_r}
$
which we can write as
\begin{align}\label{eq:instratumell}
 \frac{\gamma}{2^{\kappa_\ell}}\le
z_{i1} < \frac{\gamma+1}{2^{\kappa_\ell}}
\end{align}
for integer $\gamma = \sum_{1\le r<\ell}2^{\kappa_\ell-\kappa_r}\ge0$.
Because the net is stratified, the number of
observations $i$ that satisfy equation~\eqref{eq:instratumell}
is $2^{m-\kappa_\ell}=2^{m_\ell}$.
Then $\{u_i\mid i\in S_{\ell}\}\subset[0,1)^s$  has cardinality $2^{m_\ell}$
making it a $(t_\ell,m_\ell,s)$-net in base $2$ for some value $t_\ell\le m_\ell$.

Now consider a dyadic box in $[0,1]^s$ of the form
$$
\prod_{j=1}^s\Bigl[ \frac{c_j}{2^{k_j}},\frac{c_j+1}{2^{k_j}}\Bigr).
$$
This box has volume $2^{-|\bsk|}$. The number of points $\bsu_i$
from stratum $\ell$ in this box equals the number of points
$\bsz_i\in[0,1)^{s+1}$ that belong to
$$\Bigl[\frac{\gamma}{2^{\kappa_\ell}},\frac{\gamma+1}{2^{\kappa_\ell}}\Bigr)
\times\prod_{j=1}^s\Bigl[
\frac{c_j}{2^{k_j}},\frac{c_j+1}{2^{k_j}}\Bigr).$$
If $m-\kappa_\ell-|\bsk|\ge t$, then
by the digital net property of $\bsz_1,\dots,\bsz_n$, this number is $2^{m-\kappa_\ell-|\bsk|}$.
That is, if $m_\ell\ge|\bsk|+t$, then the number of such points is
$2^{m_\ell-|\bsk|}$.
\end{proof}

\subsection{Proof of Proposition~\ref{prop:stilllowdiscrep}: discrepancy within strata}\label{sec:proof:prop:stilllowdiscrep}
\begin{proof}
The local discrepancy of $\bsz_1,\dots,\bsz_n$ 
at the point
$(b,\bsc)\in[0,1]^{s+1}$ is
$$
\delta( b,\bsc) = \frac1n\sum_{i=1}^n\ind\{0\le v_i<b\}\ind\{\bsz_i\in[\bszero,\bsc)\} - b\times\vol([\bszero,\bsc))
$$
where $\vol([\bszero,\bsc))=\prod_{j=1}^sc_j$. 
Because $\bsz_1,\dots,\bsz_n$ are a digital
net, $|\delta(b,\bsc)|=O( n^{-1}\log(n)^{s})$.

The sample points for stratum $\ell$ are
the $n_\ell$ points $\bsx_i$ for which $B_{\ell-1}\le v_i<B_\ell$.
By equation~\eqref{eq:nlbound}, $n_\ell = n\beta_\ell+O(1)$.
The local discrepancy within stratum $\ell$ at $\bsc\in[0,1)^s$ is
\begin{align*}
&\ \quad\frac1{n_\ell}\sum_{i=1}^n\ind\{B_{\ell-1}\le v_i<B_\ell\}
\ind\{\bsx_i\in[\bszero,\bsc)\}-\vol([\bszero,\bsc))\\
&=\frac{n}{n_\ell}\Bigl(
\beta_\ell\vol([\bszero,\bsc))+\delta(B_\ell,\bsc)-\delta(B_{\ell-1},\bsc)
\Bigr)
-\vol([\bszero,\bsc)).
\end{align*}
Its absolute value is at most
\begin{align*}
\Bigl|\frac{n\beta_\ell}{n_{\ell}}-1\Bigr|\vol([\bszero,\bsc))
+\frac{2n}{n_\ell}O(n^{-1}\log(n)^{s})
&=O(n_{\ell}^{-1}\log(n_\ell/\beta_\ell)^{s})\\
&=O( n_\ell^{-1} \log(n_\ell)^{s})
\end{align*}
and so the star discrepancy of $\bsu_i$ for $i\in S_\ell$
is $O( n_\ell^{-1} \log(n_\ell)^{s})$.
\end{proof}

\subsection{Proof of Proposition~\ref{prop:monotone}: monotonocity of $I_0$}\label{sec:proof:prop:monotone}
\begin{proof}
For fixed $\rho$ we may write
$$
I_0(\gamma\giv\rho)\propto \tilde I_0(\gamma\giv\rho)
= \sum_{\ell=1}^L\alpha_\ell^{2-2\rho/(\gamma+1)}\Biggl(\sum_{k=1}^L\alpha_k^{2/(\gamma+1)}\Biggr)^\rho,
$$
so it is enough to show that~\eqref{eq:monosubopt} holds with
$I_0$ replaced by $\tilde I_0$.
Now write $2/(\gamma+1) = 2/(\rho+1)+x$, so we can use $x$
to characterize how $\gamma$ differs from $\rho$
with $x=0$ corresponding to $\gamma=\rho$.
Then we can write $\tilde I_0(\gamma\giv\rho)=h(x)$ where
\begin{align*}
h(x) &=\biggl( \,\sum_{\ell=1}^L\alpha_\ell^{2-2\rho/(\rho+1)-\rho x}\biggr)
\biggl(\,\sum_{k=1}^L\alpha_k^{2/(\rho+1)+x}\biggr)^\rho
 =\biggl( \,\sum_{\ell=1}^L\alpha_\ell^{2/(\rho+1)-\rho x}\biggr)
\biggl(\,\sum_{k=1}^L\alpha_k^{2/(\rho+1)+x}\biggr)^\rho.
\end{align*}
It is enough to show that $h(x)$ is non-decreasing as $x$ either increases
from zero or decreases from zero.  We will do this by showing
that $xh'(x)\ge0$.

For integers $r\ge0$ define 
\begin{align*}
S_+(r) &= S_+(r,x)=\sum_{\ell=1}^L\alpha_\ell^{2/(\rho+1)+x}\log(\alpha_\ell)^r
\quad\text{and}\\
S_-(r) &= S_-(r,x)=\sum_{\ell=1}^L\alpha_\ell^{2/(\rho+1)-\rho x}\log(\alpha_\ell)^r.
\end{align*}
Their derivatives with respect to $x$ are
$S_+(r+1)$ and $-\rho S_-(r+1)$, respectively.
Also, $S_+(r,0)=S_-(r,0)$ which we write as $S(r)$.

Now we write
$h(x) = S_-(0)S_+(0)^\rho$
so
$h'(x) = -\rho S_-(1)S_+(0)^\rho+\rho S_-(0)S_+(0)^{\rho-1}S_+(1)$.
Then
\begin{align*}
\frac{h'(x)}{\rho S_+(0)^{\rho-1}} &=
S_+(1)S_-(0)-S_+(0)S_-(1)\\
&=\sum_{i=1}^L\sum_{j=1}^L
\alpha_i^{2/(\rho+1)-\rho x}\alpha_j^{2/(\rho+1)+x}\log(\alpha_j)
-\alpha_i^{2/(\rho+1)-\rho x}\alpha_j^{2/(\rho+1)+x}\log(\alpha_i)\\
&=\sum_{i=1}^{L-1}\sum_{j=i+1}^L \log\Bigl(\frac{\alpha_j}{\alpha_i}\Bigr)c_{ij}
\end{align*}
where
\begin{align*}
c_{ij}&=
\alpha_i^{2/(\rho+1)-\rho x}\alpha_j^{2/(\rho+1)+ x}
-\alpha_i^{2/(\rho+1)+ x}\alpha_j^{2/(\rho+1)-\rho x}\\
&=\alpha_i^{2/(\rho+1)} \alpha_j^{2/(\rho+1)} \Bigl(\Bigl(\frac{\alpha_j}{\alpha_i^\rho}\Bigr)^x-\Bigl(\frac{\alpha_i}{\alpha_j^\rho}\Bigr)^x\Bigr).
\end{align*}

If $\alpha_j\geq \alpha_i$, then $\alpha_j/\alpha_i^\rho\geq\alpha_i/\alpha_j^\rho$ and so $x c_{ij}\geq 0$. Because $\log(\alpha_j/\alpha_i)\geq 0$, we conclude that $x\log(\alpha_j/\alpha_i)c_{ij}\geq 0$. If instead $\alpha_j\le \alpha_i$, then a similar argument shows that $xc_{ij}\le0$ and once again $x\log(\alpha_j/\alpha_i)c_{ij}\geq 0$.
Hence
$$xh'(x)=\rho S_+(0)^{\rho-1}\sum_{i=1}^L\sum_{j=1}^L x \log(\alpha_j/\alpha_i)c_{ij} \geq 0$$
as required.
\end{proof}

\subsection{Proof of Proposition \ref{prop:beton3vs1whenits2}: $I_0(3\giv 2)\le I_0(1\giv 2)$ }
\label{sec:proof:prop:beton3vs1whenits2}
\begin{proof}
From equation~\eqref{eq:i0ineff}
$$
I_0(1\giv2) = \frac{L}{\bigl(\sum_{\ell=1}^L\alpha_\ell^{2/3}\bigr)^3}
\quad\text{and}\quad
I_0(3\giv2) = \frac{\bigl(\sum_{\ell=1}^L\alpha_\ell^{1/2}\bigr)^2}{\bigl(\sum_{\ell=1}^L\alpha_\ell^{2/3}\bigr)^3}.
$$
These expressions have the same denominator and then the
result follows from the Cauchy-Schwarz inequality applied
to the numerators.
\end{proof}

\subsection{Proof of Proposition~\ref{prop:alwaysterminates}: sure termination}
\label{sec:proof:prop:alwaysterminates}
\begin{proof}
First, if $n'>0$ then we can show that there is an `eligible' index
$\ell$ with $n_\ell\le n'$, that can be doubled
without causing $\sum_{\ell=1}^Ln_\ell>n$. At any stage in the
algorithm, $n_{\ell^*}$ for $\ell^*=\argmin_{1\le \ell\le L}n_\ell$ is a power of two.  It divides $n$ and it divides every
$n_\ell$, so it also divides the nonzero value $n'=n-\sum_{\ell=1}^Ln_\ell$, hence $n_{\ell^*}\le n'$. Finally, every step with $n'>0$ reduces $n'$, and so $n'$ must reach zero in a finite
number of steps starting from $n'=n-L$.
\end{proof}

\subsection{Proof of Proposition~\ref{prop:minmax}: minimax sample sizes}
\label{sec:proof:prop:minmax}
\begin{proof}
The same argument applies to both criteria.  We prove it
for criterion 1. Fixing $\bsn$ and $\tilde\bsn$ we
see that the maximum over $\bstau$ of $R_1(\bsn\giv\tilde\bsn;\bstau)$ arises
when $\tau_\ell=0$ for all $\ell$ except 
$\ell^*(\bsn,\tilde\bsn)=\argmax_\ell (\tilde n_\ell/n_\ell)^{\rho/2}$ with $\tau_{\ell^*(\bsn,\tilde\bsn)}$ then taking any value $\lambda>0$.

Now, we can go back to the initial problem and rewrite \eqref{eq:mainpb} as:
$$\min_{\bm{n}\in \Delta_N}\max_{\tilde{\bm{n}}\in \Delta_N}
\Bigl(\frac{\tilde{n}_{\ell^*(\bsn,\tilde\bsn)}}{{n}_{\ell^*(\bsn,\tilde\bsn)}}\Bigr)^{\rho/2}
$$
after noting that the factor $\alpha_{\ell^*(\bsn,\tilde\bsn)}\tau_{\ell^*(\bsn,\tilde\bsn)}^{1/2}$ cancels
from numerator and denominator.
To maximize this ratio over $\tilde\bsn$ one takes
$\tilde\bsn$ to be the vector with value $N-L+1$
in component $\ell^*=\argmin_\ell n_\ell$ and ones elsewhere. Then the minimax
problem becomes
$\min_{\bsn\in\Delta_N}\bigl((N-L+1)/\min_{1\le\ell\le N} n_\ell\bigr)^{\rho/2}$.  The solution is to maximize the
minimum value of $n_\ell$.
We do that by making $\min_\ell n_\ell=\lfloor N/L\rfloor$.
\end{proof}

\subsection{\changed{Varying $\rho_\ell$}}\label{sec:varyrho}
\changed{
Our model for sample allocation is that we start with $n_\ell=1$ for $\ell=1,\dots,L$.
Then we repeatedly increment $n_{\ell^*}$ by one for
$
\ell^*=\arg\max_\ell\alpha_\ell^2(n_\ell^{-\rho_\ell}- (n_\ell+1)^{-\rho_\ell})
$
because this gives the stratum that most reduces the criterion.
For our RQMC estimate to be consistent, we require $\min_\ell n_\ell\to\infty$, and so we assume
that all of the $n_\ell$ are large enough to allow a Taylor approximation below.
We also ignore ties in this analysis.

Using $(n_\ell+1)^{-\rho_\ell}=n_\ell^{-\rho_\ell}-\rho_\ell n_\ell^{-(\rho_\ell+1)} +O(n_\ell^{-\rho_\ell-2})$,
we simplify the selection rule to
$
\ell^*=\arg\max_\ell\alpha_\ell^2\rho_\ell/n_\ell^{\rho_{\ell}+1}.
$
Then for $1\le \rho_\ell <\rho_{\lp}\le 3$ under our greedy allocation
$n_{\lp}$ will only get a new point if
$$
n_{\lp} <
\Bigl(\frac{\alpha_{\lp}^2\rho_{\lp}}{\alpha_\ell^2\rho_\ell}n_\ell^{\rho_\ell+1}
\Bigr)^{1/(\rho_{\lp}+1)}
$$
and so we expect $n_{\lp} \lesssim n_\ell^{(\rho_\ell+1)/(\rho_{\lp}+1)}$, where $\lesssim$ denotes inequality up to a multiplicative constant.
For the most extreme case, with $\rho_\ell=1$ and $\rho_{\lp}=3$, we get
 $n_{\lp} \lesssim n_\ell^{1/2}$.
 }




\end{document}